	\def\docclass{my-siam}
	\def\version{arxiv}
	\def\draftmode{false} % hide debug info & notes-to-self/todo etc.
\newcommand\iflipics[2]{\ifthenelse{\equal{\docclass}{lipics}}{#1}{#2}}
\newcommand\ifkoma[2]{\ifthenelse{\equal{\docclass}{koma}}{#1}{#2}}
\newcommand\ifieee[2]{\ifthenelse{\equal{\docclass}{ieee}}{#1}{#2}}
\newcommand\ifsiam[2]{\ifthenelse{\equal{\docclass}{siam}}{#1}{#2}}
\newcommand\ifmysiam[2]{\ifthenelse{\equal{\docclass}{my-siam}}{#1}{#2}}
\newcommand\ifmanuscript[2]{\ifthenelse{\equal{\version}{manuscript}}{#1}{#2}}
\newcommand\ifarxiv[2]{\ifthenelse{\equal{\version}{arxiv}}{#1}{#2}}
\newcommand\ifsubmission[2]{\ifthenelse{\equal{\version}{submission}}{#1}{#2}}
\newcommand\ifproceedings[2]{\ifthenelse{\equal{\version}{proceedings}}{#1}{#2}}
	\equal{\version}{manuscript} 
	\OR \equal{\version}{arxiv} 
	\OR \equal{\version}{submission} 
	\OR \equal{\version}{proceedings} 
\newcommand\ifdraft[2]{\ifthenelse{\equal{\draftmode}{true}}{#1}{#2}}
\newdimen\makeboxdimen
	\let\mytitle\@title%
\newcommand\shorttitle[1]{%
	\iflipics{%
		\titlerunning{#1}%
	}{}%
	\AtBeginDocument{%
		\def\mytitle{#1}%
	}%
}
\let\oldthebibliography\thebibliography
\renewcommand\thebibliography[1]{%
	\oldthebibliography{#1}%
	\pdfbookmark[1]{References}{}%
%	\sectionmark{Bibliography}%
%	\markright{References}
}
\ttfamily\itshape{},
\ttfamily\slshape{},
\ttfamily\tiny{},
\tikzset{external/export=false} % Only export manually marked pictures, enable for costly pictures
\newcommand{\externalizedpicture}[1]{%
	\tikzset{external/export=true}%
	\tikzsetnextfilename{#1}%
	\ignorespaces%
}
	\newtheorem{fact}[theorem]{Fact}
	\newcommand\thmemph[1]{\emph{#1}}
	\newenvironment{proofof}[1]{%
		\begin{proof}[{{Proof of #1{}}}]%
	}{%
		\end{proof}%
	}
\newcommand\cleartheorem[1]{%
	\expandafter\let\csname#1\endcsname\relax
	\expandafter\let\csname c@#1\endcsname\relax
	\newcounter{#1} %hack to get the counter; referenced somewhere ...
}
	\newenvironment{proofof}[1]{%
			\begin{proof}[{{#1{}}}]%
		}{%
			\end{proof}%
		}
	\newcommand\thmemph[1]{\emph{#1}}
	\newtheorem{conjecture}[@theorem]{Conjecture}
	\newtheorem{remark}[@theorem]{Remark}
	\newtheorem{lemma}[@theorem]{Lemma}
	\newtheorem{fact}[@theorem]{Fact}
	\newtheorem{corollary}[@theorem]{Corollary}
	\newtheorem{property}[@theorem]{Property}
	\newtheorem{proposition}[@theorem]{Proposition}
	\newtheoremstyle{proofstyle}%
	  {\item[\theorem@headerfont\hskip\labelsep ##1\theorem@separator]}%
	  {\item[\theorem@headerfont\hskip\labelsep ##3\theorem@separator]}
	\newtheorem{theorem}{Theorem}[section]
	\theoremstyle{plain}
	\newtheorem{proposition}[theorem]{Proposition}
	\newtheorem{lemma}[theorem]{Lemma}
	\newtheorem{conjecture}[theorem]{Conjecture}
	\newtheorem{corollary}[theorem]{Corollary}
	\theoremstyle{plain}
	\newtheorem{remark}[theorem]{Remark}
	\theoremstyle{proofstyle}
	\newtheorem{proof}{Proof}
	\newenvironment{proofof}[1]{%
		\begin{proof}[{{Proof of #1{}}}]%
	}{%
		\end{proof}%
	}
	\newcommand\thmemph[1]{\textit{#1}}
	\newenvironment{thmenumerate}[2][]{%
		\begin{enumerate}[
	%		label={\makebox[\widthof{(a)}][c]{\textup{(\alph*)}}},
			label={\textsf{\textbf{\color{darkgray}{\makebox[\widthof{(a)}][c]{\textup{(\alph*)}}}}}},
			ref={\ref{#2}\kern.1em--\kern.1em(\alph*)},
			itemsep=0pt,
			topsep=.5ex,
			leftmargin=1.75em,
			#1
		]%
	}{%
		\end{enumerate}%
	}
\newcommand*\ie{\mbox{i.\hspace{.2ex}e.}}
\newcommand*\wrt{\mbox{w.\hspace{.2ex}r.\hspace{.2ex}t.}\xspace}
\newcommand*\aka{\mbox{a.\hspace{.2ex}k.\hspace{.2ex}a.}\xspace}
\newcommand\R{\mathbb R}
\newcommand\N{\mathbb N}
\newcommand{\ESymbol}{\mathbb{E}}
\newcommand{\ProbSymbol}{\ensuremath{\mathbb{P}}}
\DeclarePairedDelimiterXPP\Prob[1]{\ProbSymbol}[]{}{%
	#1%
}
\DeclarePairedDelimiterXPP\E[1]{\ESymbol}[]{}{%
	#1%
}
\DeclarePairedDelimiterXPP\Eover[2]{\ESymbol_{#1}}[]{}{%
	#2%
}
\DeclarePairedDelimiterXPP\ProbIn[2]{\ProbSymbol_{#1}}[]{}{%
	#2%
}
\providecommand{\Prob}{} % hack for syntax highlighting ...
\providecommand{\ProbIn}{} % hack for syntax highlighting ...
\providecommand{\E}{} % hack for syntax highlighting ...
\providecommand{\Eover}{} % hack for syntax highlighting ...
\newcommand{\surroundedmath}[3]{% #1=mathrel/mathbin/etc #2=spacing #3=symbol
	\mathchoice{%display
		#1{#2{#3}#2}%
	}{%text
		#1{#3}%
	}{%script
		#1{#3}%
	}{%scriptsript
		#1{#3}%
	}%
}
\newcommand\rel[1]{\surroundedmath{\mathrel}{\:}{#1}}
	\let\oldalign\align
	\let\endoldalign\endalign
\newcommand*\numberthis[1][]{\stepcounter{equation}\tag{\theequation}}
\newcommand\splitaftercomma[1]{%
  \begingroup
  \begingroup\lccode`~=`, \lowercase{\endgroup
    \edef~{\mathchar\the\mathcode`, \penalty0 \noexpand\hspace{0pt plus .25em}}%
  }\mathcode`,="8000 #1%
  \endgroup
}
\def\mydots{\xleaders\hbox to.5em{\hfill.\hfill}\hfill}
\newlength\tmpLenNotations
	\definecolor{refkey}{gray}{.99}
	\colorlet{labelkey}{green!60!black!60}
	\ifmanuscript{\hideLIPIcs}{}
	\ifarxiv{\hideLIPIcs}{}
\newsavebox\tmpbox
	\let\oldparagraph\paragraph
	\renewcommand\paragraph[1]{%
		\oldparagraph*{#1}%
	}
	\let\oldparagraph\paragraph
	\renewcommand\paragraph[1]{%
		\oldparagraph{#1.}%
	}
	\let\oldsubsection\subsection
	\renewcommand\subsection[1]{%
		\oldsubsection{#1.}%
	}
	\let\oldsubsubsection\subsubsection
	\renewcommand\subsubsection[1]{%
		\oldsubsubsection{#1.}%
	}
	\let\oldsubsection\subsection
	\renewcommand\subsection[1]{%
		\oldsubsection{#1.}%
	}
	\let\oldsubsubsection\subsubsection
	\renewcommand\subsubsection[1]{%
		\oldsubsubsection{#1.}%
	}
\let\epsilon\varepsilon
\def\myacknowledgements{}
\pgfplotsset{
    tick label style={font=\small},
    label style={font=\normalsize},
}
\title{%
    Towards the 5/6-Density Conjecture\\
    of Pinwheel Scheduling%
    \ifproceedings{
        \thanks{
            An extended version including proofs is available online: \url{https:/arxiv.org/xxxxxxxx}
        }}{}
}
\shorttitle{Towards the 5/6-Density Conjecture in Pinwheel Scheduling} %optional, please use if title is longer than one line
	\newcommand\email[1]{\texttt{#1}}
		\author{%
			Leszek G\k asieniec%
				\footnote{U. of Liverpool, UK, 
				\email{l.a.gasieniec\,@\,liverpool.ac.uk}}
		\and 
			Benjamin Smith%
				\footnote{U. of Liverpool, UK, 
				\email{b.m.smith\,@\,liverpool.ac.uk}}
		\and
			Sebastian Wild%
				\footnote{U. of Liverpool, UK, 
				\email{sebastian.wild\,@\,liverpool.ac.uk}}
		}
	\date{\small\today}
	\let\oldabstract\abstract\let\oldendabstract\endabstract
	\renewenvironment{abstract}{%
		\noindent\begin{minipage}{\linewidth}%
		\small\setlength{\baselineskip}{9pt}\oldabstract%
	}{%
		\oldendabstract\end{minipage}%
	}
\DeclarePairedDelimiter{\floor}{\lfloor}{\rfloor}
\DeclarePairedDelimiter{\ceil}{\lceil}{\rceil}
\DeclareMathOperator{\len}{length}
	\let\wpref\wref
	\let\wtpref\wref
\begin{document}

\ifsubmission{\onecolumn}{}

\maketitle

\ifsubmission{\begin{center}\begin{minipage}{.7\linewidth}}{}
\begin{abstract}\ifsubmission{\normalsize}{}
    Pinwheel Scheduling aims to find a perpetual schedule for unit-length tasks on a single machine 
    subject to given maximal time spans (\aka frequencies) between any two consecutive executions of the same task.
    The density of a Pinwheel Scheduling instance is the sum of the inverses of these task frequencies;
    the $5/6$-Conjecture (Chan and Chin, 1993) states that any Pinwheel Scheduling instance with density at most $5/6$ is schedulable.
    We formalize the notion of Pareto surfaces for Pinwheel Scheduling and exploit novel structural insights to engineer an efficient algorithm for computing them.
    This allows us to 
    (1) confirm the $5/6$-Conjecture for all Pinwheel Scheduling instances with at most 12 tasks and (2) to prove that a given list of only 23 schedules solves \emph{all} schedulable Pinwheel Scheduling instances with at most 5 tasks.
\end{abstract}
\ifsubmission{\end{minipage}\end{center}}{}

\ifdraft{%
    {\color{red}
        \noindent\textbf{TO DO FOR ALENEX FINAL VERSION:}
        \begin{itemize}
            \item explain trie-based search, and briefly compare running time with trie based search (ratio at k=11 for motivation, search proportion at some sample k values).
            \item comment on $k$d-trees; can claim high dimensions make other options less desirable/effective
            \item read through for simplifications.
            \item repeat table 2
            \begin{itemize}
                \item extend opt data, and trim out less reliable runs
            \end{itemize}
            \item repeat figure 3
            \begin{itemize}
                \item make figure
                \item add repeats/trim less reliable areas
            \end{itemize}
        \end{itemize}
        \textbf{TO DO LATER (journal/dissertation):}
        \begin{itemize}
            \item upgrades:
            \begin{itemize}
                \item certificates
                \item searching:
                \begin{itemize}
                    \item perfect kd trees (low update count means that we can rebuild instead of updating and thus use a better data structure)
                    \item generalise density based trie search past $a_0$ (to take $a_0$ from 2 to 4, you increase density by 1/2. use the generalisation of this to decrease pointless branching at the search stage.)
                    \item search in parallel with folding
                \end{itemize}
                \item solving:
                \begin{itemize}
                    \item process filler using rounding.
                    \item memoization
                    \item order optimised Foresight (use number of occurrences of a task in the schedule when determining order of task appearance to find shorter schedules and find them quicker).
                \end{itemize}
            \end{itemize}
            \item minimum solution length~-- theory and practice
            \item tight subsets~-- testing some hypotheses
            \item proof~-- density 1 == unfolding
        \end{itemize}
    }
}{}

\ifsubmission{%
	\clearpage
	\addtocounter{page}{-1}
	\twocolumn
}{}

\section{Introduction}

An instance of the Pinwheel Scheduling problem is defined by $k$ positive integer \emph{frequencies} 
$A = (a_1, a_2, \ldots, a_k)$, $a_1\le a_2\le\cdots\le a_k$, 
and is solved by producing a valid \emph{Pinwheel schedule} 
(if one exists, \ie, if the problem is \emph{schedulable}), or by stating that no such schedule exists.
A (Pinwheel) schedule $S_\infty=s_1 s_2\ldots$ is an infinite sequence over $[k] = \{1,\ldots, k\}$; 
it is \emph{valid} (for $A$) if every task $i$ is scheduled at least every \mbox{$a_i$} days.
Formally, any contiguous subsequence $s_t \ldots s_{t+a_i-1}$ of length $a_i$ contains 
at least one occurrence of $i$, for $i=1,\ldots,k$.

In general, deciding whether a schedule exists for a Pinwheel Scheduling instance $A$ is NP-hard;
(see \wref{sec:related} for a thorough discussion of the complexity of the problem).

The \emph{density} of a Pinwheel Scheduling instance $A=\{a_1,\ldots,a_k\}$ 
is $d = d(A) = \sum_{i=1}^k 1/a_i$.
It is easy to see that $d(A)\le 1$ is a necessary condition for $A$ to be schedulable.
Any $A$ with $d(A) \le 1/2$ can be scheduled by rounding all frequencies down to the nearest power of 2 and assigning days greedily.
This ``density threshold'', \ie, a value $d^*$ so that every instance with $d\le d^*$ is schedulable, was successively improved in a sequence of papers from $d^*=0.5$ to
$d^* = 0.\overline 6$~\cite{largeClassPinwheel},
$d^* = 0.7$~\cite{pinwheelSchedulable2over3}, and finally
$d^* = 0.75$ in 2002~\cite{achievableDensities}.
Since the Pinwheel Scheduling instance $(2,3,M)$ is not schedulable for any $M$, 
$d^*=5/6 = 0.8\overline 3$ is the best we can hope for,
and Chan and Chin conjectured in 1993
that this is tight:

\begin{conjecture}[5/6 Conjecture~\cite{largeClassPinwheel}]
\label{conj:5-6-conjecture}
\ifmysiam{~\\}{}
	Every Pinwheel Scheduling instance with density $d\le \frac56$ is schedulable.
\end{conjecture}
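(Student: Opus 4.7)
The plan is to attack the conjecture through the Pareto-surface framework announced in the abstract. Recall that an instance $A$ is schedulable if and only if it dominates (coordinate-wise) no \emph{minimally non-schedulable} instance; equivalently, the boundary between the schedulable and non-schedulable regions of $\N^k$ is a finite antichain $\mathcal{P}_k^-$ of minimally non-schedulable vectors. Since dominating an instance can only decrease density, \wref{conj:5-6-conjecture} is equivalent to the statement that every $A^* \in \bigcup_{k\ge 1} \mathcal{P}_k^-$ satisfies $d(A^*) > 5/6$. The plan is therefore to characterize this antichain well enough to bound its density from below by $5/6$.

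First, I would set up an induction on the number of tasks $k$. The base cases $k \le 12$ are handled by the computation announced in the paper, and the sharp example $(2,3,M)$ shows the bound is tight in the limit $M \to \infty$. For the inductive step, assume every minimally non-schedulable instance on fewer than $k$ tasks has density $> 5/6$ and let $A^* = (a_1,\ldots,a_k) \in \mathcal{P}_k^-$ with $k \ge 13$. The goal is to exhibit a \emph{reduction} that produces a smaller minimally non-schedulable instance of density $\le d(A^*)$: the most promising reductions are (i) \emph{trimming} the slowest task $a_k$ when the prefix $(a_1,\ldots,a_{k-1})$ is itself non-schedulable, and (ii) \emph{merging} two fast tasks into a single effective slot, as one may e.g.\ replace $(2,3)$ by the composite cycle $\mathtt{1\,2\,1\,3\,1\,2\,1\,3\,\ldots}$ whenever the remaining budget permits.

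The hard part will be the regime where $a_1 \in \{2,3,4\}$ and $d(A^*)$ is very close to $5/6$, because here the reductions above can fail: dropping $a_k$ may turn $A^*$ schedulable, and merging fast tasks consumes too much density. This is precisely the regime witnessed by the $(2,3,M)$ family and its generalizations. I would handle it by a finer structural analysis of how a hypothetical schedule must align the fast tasks inside every window, in the spirit of \cite{largeClassPinwheel,pinwheelSchedulable2over3,achievableDensities}: if $a_1 = 2$ then exactly every other slot is forced, and a combinatorial argument on the remaining ``odd'' slots should reduce the problem to a $(k-1)$-task instance on the half-speed timeline whose density is easily bounded; analogous (but more delicate) arguments cover $a_1 = 3$ and $a_1 = 4$. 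For $a_1 \ge 5$, density $\le 5/6$ already forces enough slack that a greedy round-to-power-of-2 schedule (as in the classical $d^* = 3/4$ result) succeeds, giving a contradiction with non-schedulability.

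If a clean inductive reduction proves out of reach, a fallback is to convert the Pareto-surface algorithm into a \emph{certifying} procedure: one would show that for every $k$ there is a finite, explicitly bounded family of parametrized schedule templates (generalizing the 23 schedules found for $k \le 5$) whose union covers every instance with $d \le 5/6$. Verification of such a cover could be discharged symbolically in the frequencies of the slowest tasks, so that the case analysis remains finite even as $k$ grows. Either way, the chief obstacle is the near-tight band $5/6 - \varepsilon \le d \le 5/6$: the entire conjecture lives there, and any approach must eventually rule out an adversarial extremal sequence generalizing $(2,3,M)$.
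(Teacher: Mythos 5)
The statement you are attempting to prove is a \emph{conjecture}, not a theorem of the paper. The paper does not prove it; it verifies it computationally for $k \le 12$ tasks by building Pareto surfaces, and the general case remains open. So there is no ``paper proof'' to compare against, and your text is, as you yourself acknowledge with phrases like ``the hard part will be'' and ``if a clean inductive reduction proves out of reach,'' a research plan rather than a proof.

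Beyond that, the framework you set up at the start has a concrete defect. In the paper's convention $A \le B$ means $a_i \le b_i$ for all $i$ and ``$A$ dominates $B$''; schedulability propagates \emph{upward} ($A$ schedulable and $A \le B$ implies $B$ schedulable), so the non-schedulable region is downward-closed. What the paper shows to be finite (\wref{thm:finite-pareto}, \wref{lem:characterization-pareto}) is the antichain of \emph{minimally schedulable} instances $\mathcal C_k$: schedulable $A$ for which every $A_{i-}$ is infeasible. There is no analogous finite antichain $\mathcal P_k^-$ of ``minimally non-schedulable'' instances forming the boundary from below: the family $(2,3,M)$ is non-schedulable for every $M$ and is never maximal among non-schedulable instances, so the set of boundary non-schedulable instances is infinite and your stated equivalence (``$A$ is schedulable iff it dominates no element of a finite $\mathcal P_k^-$'') does not hold as written. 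The correct finite object to reason about is $\mathcal C_k$, and the conjecture is equivalent to: every instance of density $\le 5/6$ is dominated by some element of $\mathcal C_k$. Your inductive step (trimming/merging to reduce $k$) is then entirely unsupported; you give no argument that a minimal counterexample on $k \ge 13$ tasks admits such a reduction, and you concede the exact regime where the conjecture lives ($d$ near $5/6$, small $a_1$) is precisely where the reductions fail. As the paper stands, the only content behind \wref{conj:5-6-conjecture} is the exhaustive certification for $k \le 12$ via \wref{thm:finite-pareto} and the oracle of \wref{pro:oracle}, plus the earlier general lower bound $d^* \ge 3/4$ from~\cite{achievableDensities}.
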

No further progress on the gap between these general upper and lower bounds has been made for almost two decades.

We confirm \wref{conj:5-6-conjecture} for all Pinwheel Scheduling instances with $k\le 12$ tasks. This vastly expands recent work by Ding~\cite{5over6upto5tasks}, which achieved the same for up to 5 tasks using exhaustive manual case analysis.
The larger number of tasks substantially strengthens the confidence in the 5/6-conjecture since
these instances are a rich and diverse class of Pinwheel Scheduling problems,
and extending well beyond smaller, simpler cases.

Both~\cite{5over6upto5tasks} and this work are based on the observation that the infinitely many Pinwheel Scheduling instances with a fixed number $k$ of tasks actually fall into only a finite number of equivalence classes \wrt schedulability: the Pareto surfaces introduced in \wref{sec:pareto-theorem}.
Our works vastly differ in the methodology for finding these: 
Ding manually evaluates all possibilities, justifying independently for each possible
case that it either has density above $\frac56$ or admits a schedule.
We instead devise general algorithms to efficiently automate this task; 
our main achievement here is to substantially reduce the effort to verify the completeness of a Pareto surface: for $k=11$, from tens of thousands of calls to an oracle for an NP-hard problem to just 37 such calls.

Our result draws on a combination of structural insights about 
Pinwheel Scheduling 
and heavily engineered implementations of algorithms.
By systematically extending smaller instances to more tasks, an iterative algorithm
computes the Pareto surfaces for all instances with \emph{up to} $k$ tasks with overall dramatically fewer oracle calls than the method of~\cite{5over6upto5tasks} for a fixed value~$k$.

\needspace{3\baselineskip}

We further extend Ding's methodology to the set of \emph{all} Pinwheel Scheduling instances with $k$ tasks (instead those of density $d\le \frac56$).
We show that for any $k$, there is a \emph{finite} set $\mathcal C_k$ of schedules, so that any instance with $k$ tasks can be solved if and only if it can be solved by a schedule from $\mathcal C_k$,
and we give certifying algorithms for computing $\mathcal C_k$.
Their running time grows very rapidly with $k$, 
but we give $\mathcal C_k$ up to $k=5$ (see \wtpref{tab:pareto-surfaces}).
The highly efficient backtracking algorithms for general Pinwheel Scheduling instances
developed as part of this research are of independent interest, 
both for Pinwheel Scheduling itself, as well as for the related Bamboo Garden Trimming problem~\cite{BGTIntro}.

\paragraph{Outline}

The remainder of this first section gives a more comprehensive discussion of related work and
the rest of the paper is organized into a theory part and an engineering part.
We first introduce basic notions about Pinwheel Scheduling in 
\wref{sec:prelims}, followed by the main theoretical results in \wref{sec:pareto-theorem}.
\wref{sec:algorithms} describes our engineered implemented oracles for single Pinwheel Scheduling instances, which determine feasibility and find schedules;
\wref{sec:5-6-surface} then describes our implementation of the Pareto-surface computation.
In \wref{sec:evaluation}, we report on a running-time study, comparing our algorithms and analyzing their efficiency.
We conclude with a summary of results and future work in \wref{sec:conclusion}.
\ifproceedings{%
    Some proofs are deferred to the appendix of the extended online version.
}{%
    \wref{app:proofs} contains a few proofs omitted from the main text.
    \wref{app:worked-examples} gives worked examples for the algorithms from \wref{sec:algorithms}.
}%

\subsection{Related Work}
\label{sec:related}

The Pinwheel Scheduling problem was originally proposed 
by Holte et al.~\cite{PinwheelIntro} in 1989 
in the context of assigning receiver time slots to satellites 
with varying bandwidth requirements which share a common ground station.
They introduce the notion of density, show that $d>1$ implies infeasibility, and give the algorithm to schedule any instance with $d\le\frac12$.
A sequence of papers~\cite{largeClassPinwheel,pinwheelSchedulable2over3,achievableDensities}
extended this result to all instances of density at most $\frac34$.

A second line of research aims to confirm \wref{conj:5-6-conjecture} for restricted
classes of instances.
Efficient algorithms for computing schedules are sometimes also considered; 
here the complication that exponentially long periodic schedules can be necessary 
lead to the introduction of ``fast online schedulers'' as output, 
\ie, a simple program that can produce the schedule on demand~\cite{PinwheelIntro}.
Closest to our work is a recent article by Ding~\cite{5over6upto5tasks},
who confirmed \wref{conj:5-6-conjecture} for instances with $k\leq 5$ tasks
through manually determining a Pareto trie (in our terminology) of instances with $d\le \frac56$.

An orthogonal line of work considered all Pinwheel Scheduling instances with a
fixed number of \emph{distinct} values for frequencies (but an arbitrary number of tasks):
first for $2$ distinct frequencies~\cite{PinwheelIntro,pinwheel2numbers} and later for $3$~\cite{3Task5over6}.
In each case, the approaches taken seem unsuitable for extension beyond the scenarios studied therein.

Various generalizations of Pinwheel Scheduling have also been studied,
for example dropping the requirement of unit-length jobs~\cite{HanLin1992,generalisedPinwheelScheduling}.

The complexity status of Pinwheel Scheduling has gained some notoriety in the literature.
Holte et al.~\cite{PinwheelIntro} showed that the problem is in PSPACE; whether it is in NP
is not obvious since exponentially long periodic schedules can be necessary, so a standard approach of nondeterministically guessing a witness for feasibility does not have polynomial runtime.
Holte et al.\ further stated that the problem is NP-hard in compact encoding, \ie, 
when all tasks of the same frequency are encoded as a pair of integers
(the frequency and the number of such tasks)
but they postponed the proof to a follow-up article that seems not to have been published.

Let Exact-Pinwheel Scheduling be the variant of the Pinwheel Scheduling problem 
where a schedule is only valid if two consecutive executions of task $i$ are \emph{exactly}
$a_i$ days apart.
Bar-Noy et al.~\cite[Thm. 13]{bar2002minimizing} show that this problem is NP-complete (they refer to it as Periodic Maintenance Scheduling) by a reduction from Graph Colouring.
Later, Bar-Noy et al.~\cite{windowsSchedulingBinPacking} observe that we can also reduce
Exact-Pinwheel Scheduling to the special case of standard Pinwheel Scheduling of dense instances by filling up an instance $A$ with exact frequencies $a_1,\ldots,a_k$ with as many tasks of exact frequency $\operatorname{lcm}(a_1,\ldots,a_k)$ as needed to reach density $1$.
Finally, on dense instances, exact and standard (upper-bound) frequencies are equivalent.
Together, this proves that Pinwheel Scheduling in compact encoding is indeed NP-hard.

In an arxiv preprint from 2014, Jacobs and Longo~\cite{pinwheelNPhard} strengthened these results to prove NP-hardness for 
Pinwheel Scheduling in standard representation, \ie, where the frequencies are simply encoded as a sequence. They also claim a reduction to instances with maximal period length
in $n^{O(\log n \log \log n)}$, indicating that even pseudo-polynomial algorithms for Pinwheel Scheduling are unlikely to exist. These results do not seem to appear in a peer-reviewed venue.

To our knowledge, known reductions only lead to instances of density $d=1$;
whether Pinwheel Scheduling remains NP-hard for instances with density $d<1$ seems yet to be determined.

\section{Preliminaries}
\label{sec:prelims}

In this section, we define some core notation used throughout this paper,
and we collect some facts about Pinwheel Scheduling.
Most of these have appeared in previous work, but the proofs are so short that we prefer to give a self-contained presentation.

It will be convenient to slightly extend schedules to also allow a special symbol ``--'',
which means that \emph{no} task is executed on that day.
We refer to these days as \emph{holidays} or \emph{gaps} in the schedule.
Clearly, any holidays in a valid schedule could be filled with an arbitrary task without 
affecting its validity.

If a Pinwheel Scheduling instance $A=(a_1,\ldots,a_k)$ satisfies $a_k = a_{k-1} = \cdots = a_{k-\ell+1}\ne a_{k-\ell}$, we call $\ell$, the number of tasks of equal maximal frequency, the \emph{symmetry} of $A$.

We call $A$ \emph{dense} if its density is $d=1$.

Let $A=(a_1,\ldots,a_k)$ be a Pinwheel Scheduling instance with valid infinite schedule $S_\infty$.
For any day $t$ in a schedule $S_\infty$, the \emph{state}, $X=X(t)$, of a Pinwheel Scheduling instance is a vector $X=(x_1,\ldots,x_k)$, where $x_i$ is the number of days in the schedule since the last occurrence of $i$,
$x_i = t-\max\{t'\le t : s_i(t') = i\}\cup\{0\}$.
Note that since $S_\infty$ is valid, all states it reaches must also be \emph{valid} ($0\le x_i< a_i$, for $i\in[k]$).
This condition also implies that there are only finitely many valid states.

This notion of states allows us to cast Pinwheel Scheduling to a graph problem.
Define the \emph{state graph} $G_A=(V_A, E_A)$ as the directed graph with all possible states as vertices, \ie,
$V_A = \{(x_1,\ldots,x_k) : \forall i\in[k]\: 0\le x_i < a_i \}$, and an edge $(X,Y) \in E_A$ 
\begin{enumerate}
    \item if $\exists j\in[k] : \bigl( y_j = 0 \rel\land \forall i\in[k]\setminus\{j\} : y_i = x_i+1 \bigr)$
    	(\textit{task edges}),
    \item if $\forall i\in[k] : y_i = x_i+1$
    	(\textit{gap edges}), or 
    \item if $X=X_0\ne Y$ where $X_0=(0,\ldots,0)$ (\textit{start edges}).
\end{enumerate}
Then $A$ is schedulable if and only if $G_A$ contains an infinite walk starting at $X_0 = (0,\ldots,0)$.
Since $G_A$ is finite and we have the start edges, 
$A$ is schedulable if and only if $G_A$ contains a (directed) cycle.
We call a state \emph{sustainable} if it can be revisited infinitely often by some valid schedule (\ie, when it is part of a directed cycle in $G_A$).

It further follows from the state-graph representation that if $A$ is at all schedulable,
it is so by a \emph{periodic} schedule, \ie, there is a schedule $S=s_1s_2\ldots$ and an integer $p$,
so that for all $t$ we have $s_t = s_t+p$.
Unless explicitly mentioned in the following we assume schedules to be periodic 
and we represent them by the finite periodic part, $S=s_1\ldots s_p$.
Since $p$ corresponds to the length of a cycle in $G_A$, we can always find $S$ with 
$p=|S| \le |V_A| = \prod_{i=1}^k a_i$ (cf.~\cite{PinwheelIntro}) if $A$ is schedulable.

\begin{table}[htb]
\centering\smaller[1]
\begin{tabular}{*{2}{ >{\(} l <{\)} }}
\toprule
\textbf{Instance} & \textbf{Schedule}         \\
\midrule
(1) & (1) \\
\midrule
(2,2) & (1,2) \\
\midrule
(2, 4, 4)     & (1, 2, 1, 3) \\
(3, 3, 3)     & (1, 2, 3)   \\
\midrule
(2, 4, 8, 8)     & (1, 2, 1, 3, 1, 2, 1, 4) \\
(2, 6, 6, 6)     & (1, 2, 1, 3, 1, 4) \\
(3, 3, 6, 6)     & (1, 2, 3, 1, 2, 4) \\
(3, 4, 5, 8)     & (1, 2, 4, 1, 3, 2, 1, 3) \\
(3, 5, 5, 5)     & (1, 2, 3, 1, 4, 2, 1, 3, 4) \\
(4, 4, 4, 4)     & (1, 2, 3, 4) \\
\midrule
(2, 4, 8, 16, 16)       & (1, 2, 1, 3, 1, 2, 1, 4, 1, 2, 1, 3, 1, 2, 1, 5) \\
(2, 4, 12, 12, 12)      & (1, 2, 1, 3, 1, 2, 1, 4, 1, 2, 1, 5) \\
(2, 6, 6, 12, 12)       & (1, 2, 1, 3, 1, 4, 1, 2, 1, 3, 1, 5) \\
(2, 6, 8, 10, 16)       & (1, 2, 1, 3, 1, 5, 1, 2, 1, 4, 1, 3, 1, 2, 1, 4) \\
(2, 6, 10, 10, 10)      & (1, 2, 1, 3, 1, 4, 1, 2, 1, 5, 1, 3, 1, 2, 1, 4, 1, 5) \\
(2, 8, 8, 8, 8)         & (1, 2, 1, 3, 1, 4, 1, 5) \\
(3, 3, 6, 12, 12)       & (1, 2, 3, 1, 2, 4, 1, 2, 3, 1, 2, 5) \\
(3, 3, 9, 9, 9)         & (1, 2, 3, 1, 2, 4, 1, 2, 5) \\
(3, 4, 5, 14, 14)       & (1, 2, 3, 1, 4, 2, 1, 3, 1, 2, 5, 1, 3, 2) \\
(3, 4, 6, 10, 16)       & (1, 2, 3, 1, 4, 2, 1, 3, 1, 2, 4, 1, 3, 2, 1, 5) \\
(3, 4, 6, 11, 11)       & (1, 2, 3, 1, 5, 2, 1, 3, 2, 1, 4) \\
(3, 4, 8, 8, 8)         & (1, 2, 4, 1, 5, 2, 1, 3) \\
(3, 5, 5, 9, 9)         & (1, 2, 5, 1, 3, 2, 1, 4, 3) \\
(3, 5, 6, 7, 12)        & (1, 2, 4, 1, 3, 2, 1, 4, 2, 1, 3, 5) \\
(3, 5, 7, 7, 9)         & (1, 2, 3, 1, 4, 2, 1, 5, 3, 1, 2, 4, 1, 3, 2, 1, 5, 4) \\
(3, 5, 7, 8, 8)         & (1, 2, 3, 1, 4, 2, 1, 5, 1, 3, 2, 1, 4, 5) \\
(3, 6, 6, 6, 6)         & (1, 2, 3, 1, 4, 5) \\
(4, 4, 4, 8, 8)         & (1, 2, 3, 4, 1, 2, 3, 5) \\
(4, 4, 5, 7, 12)        & (1, 2, 3, 4, 1, 2, 5, 3, 1, 2, 4, 3) \\
(4, 4, 6, 6, 6)         & (1, 3, 2, 4, 1, 5, 2, 3, 1, 4, 2, 5) \\
(4, 5, 5, 6, 10)        & (1, 2, 3, 5, 1, 4, 2, 3, 1, 4) \\
(4, 5, 5, 7, 7)         & (1, 2, 5, 3, 1, 4, 2, 1, 3, 5, 2, 1, 4, 3) \\
(5, 5, 5, 5, 5)         & (1, 2, 3, 4, 5) \\
\bottomrule
\end{tabular}
\caption{The Pareto surfaces $\mathcal C_k$ for $k\le 5$.}
\label{tab:pareto-surfaces}
\end{table}

Given two Pinwheel Scheduling instances $A=(a_1,\ldots,a_k)$ and $B=(b_1,\ldots,b_k)$
(with the same number of tasks $k$),
we say that $A$ \emph{dominates} $B$, written $A\le B$, if $\forall i\in[k]: a_i\le b_i$.
Obviously, any schedule that is valid for $A$ is also valid for $B$.
Moreover, $A\le B$ implies $d(A) \ge d(B)$.

We call a schedulable Pinwheel Scheduling instance $A$ \emph{loosely schedulable}
if it admits a periodic schedule with a gap (\ie, when $G_A$ has a cycle containing a gap edge);
otherwise $A$ is \emph{tight / tightly feasible}.
For example, $(2,4)$ is loosely schedulable by $(1,2,1,-)$, whereas 
$(2,3)$ is tightly feasible despite having density $5/6 < 1$; 
observe that $(2,3,*)$ is not schedulable for any value of~$*$.

\begin{proposition}
\label{pro:oracle}
	Given a Pinwheel Scheduling instance $A=(a_1,\ldots,a_k)$,
	it can be decided whether $A$ is infeasible, tightly schedulable or
	loosely schedulable using $O(k \prod_{i=1}^k a_i)$ time and space.
	Moreover, if it exists, a corresponding schedule can be computed with the same complexity
	and has length at most $\prod_{i=1}^k a_i$.
\end{proposition}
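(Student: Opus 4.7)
The plan is to reduce the three decision questions to standard graph computations on the state graph $G_A = (V_A, E_A)$ defined in \wref{sec:prelims}. Since $|V_A| = \prod_{i=1}^k a_i$ and each vertex has at most $k+1$ outgoing edges (one task edge per task, at most one gap edge; start edges only leave $X_0$), we have $|E_A| \in O(k \prod_{i=1}^k a_i)$. I would either build $G_A$ explicitly or compute its edges on demand from the state vector; either representation fits the claimed $O(k \prod_{i=1}^k a_i)$ budget.

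The core subroutine is a strongly-connected-component (SCC) decomposition of $G_A$, e.g., via Tarjan's algorithm, in $O(|V_A| + |E_A|) = O(k \prod_{i=1}^k a_i)$ time and space. By the observation recorded in \wref{sec:prelims}, $A$ is schedulable iff $G_A$ contains a directed cycle, which in turn holds iff some SCC is \emph{non-trivial} (of size at least $2$, or a singleton with a self-loop); this is tested in $O(|V_A|)$ by one scan over the SCC labels. Note that the start edges are only needed to certify reachability of every valid state from $X_0$, so they play no role in cycle detection.

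To distinguish tight from loose schedulability, I would iterate over the gap edges (at most one per vertex) and check, via the SCC labels, whether any gap edge $(X,Y)$ has both endpoints in the same non-trivial SCC. This captures loose schedulability: the ``$\Leftarrow$'' direction holds because such an $(X,Y)$ can be closed into a cycle by appending any path from $Y$ back to $X$ inside its SCC, and the ``$\Rightarrow$'' direction is immediate, since the witnessing gap edge of any looping schedule lies on a cycle within its own SCC. Hence $A$ is loosely schedulable iff such an edge exists, tightly feasible iff some non-trivial SCC exists but none contains a gap edge internally, and infeasible otherwise. To output a schedule, pick a suitable non-trivial SCC $C$ (one containing a gap edge if loose is desired), extract a simple directed cycle of $C$ by a single DFS restricted to $C$, and emit its edge labels as the periodic part $S$; since the cycle fits inside $C$, we have $|S| \le |C| \le |V_A| = \prod_{i=1}^k a_i$, matching the length bound in the statement.

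The principal mathematical subtlety is the gap-edge/SCC equivalence just sketched; once that is laid out, the remaining work is textbook graph theory applied to the already-constructed $G_A$. The main practical obstacle I anticipate is simply the bookkeeping for enumerating \emph{valid} states only, i.e., those with $0 \le x_i < a_i$ for every $i$, and for recovering the edge labels (task indices or ``--'') along the extracted cycle; neither affects the stated asymptotic bounds.
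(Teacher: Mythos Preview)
Your proof is correct and follows essentially the same approach as the paper: construct $G_A$, compute its strongly connected components, test for a non-trivial component and for a gap edge internal to some component, then extract a schedule via an additional DFS. Your exposition is in fact slightly more careful than the paper's (you explicitly note that start edges are irrelevant for cycle detection and cover the self-loop corner case), but the argument is the same.
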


\needspace{3\baselineskip}

\begin{proof}
	We construct the state graph $G_A$ and compute the strongly connected components of~$G_A$.
	Note that $G_A$ contains a directed cycle iff there is a strong component containing at least two vertices; moreover, $G_A$ contains a directed cycle containing a gap edge
	iff there is a gap edge with both endpoints in the same strong component.
	Both the computation of strong components and testing these two conditions can be done in time linear in the size of $G_A$.
	We have $|V_A| = \prod_{i=1}^k a_i$ vertices in $G_A$.
	Since each vertex apart from $X_0$ in $G_A$ has at most $k$ outgoing edges
	and $X_0$ has at most $|V_A|$ outgoing edges, $|E_A|\le (k+1)|V_A|$.
	A schedule can be found using another depth-first search.
\end{proof}

We point out that the algorithm sketched in the proof above is mostly of theoretical interest
due to its prohibitive space cost.
We present several alternatives in \wref{sec:algorithms}.

\subsection{Small Frequency Conjecture}
\label{sec:conjectures}

We propose below two new conjectures about the structure of schedulable 
Pinwheel Scheduling instances.
These arose from observations made while engineering our algorithms, 
but are of independent interest.
We list evidence in their support in \wref{sec:5/6Optimisations}.

\begin{conjecture}[$2^k$ Conjecture]
\label{conj:2^k-conjecture}
\ifmysiam{~\\}{}\ifsiam{~\\}{}
	Let $A = (a_1,\ldots,a_k)$ be a \thmemph{loosely} schedulable Pinwheel Scheduling instance.
	Then $A$ admits a schedule $S$ with a holiday at least every $2^k$ days.
\end{conjecture}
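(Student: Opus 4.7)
The plan is to proceed by induction on the number of tasks $k$, combining two partial reductions that cover most cases and leaving one subregime as the main obstacle.

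For the base case $k = 1$, loose schedulability forces $a_1 \ge 2$, and the schedule $(1, -)$ achieves a holiday every $2 = 2^1$ days.

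For the inductive step, we assume the claim for all loosely schedulable instances with fewer than $k$ tasks and let $A = (a_1, \ldots, a_k)$ be loosely schedulable. We distinguish three cases.

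\emph{Case A ($a_k \ge 2^k$).} Remove task $k$: since replacing task-$k$ executions by holidays in any loose schedule for $A$ yields a loose schedule for $A' = (a_1, \ldots, a_{k-1})$, $A'$ is loosely schedulable. The induction hypothesis supplies a schedule $S'$ for $A'$ with holidays at distance $\le 2^{k-1}$. Taking two concatenated copies of $S'$ and filling every other holiday with an execution of task $k$ yields a valid schedule for $A$: consecutive task-$k$ executions are separated by at most $2 \cdot 2^{k-1} = 2^k \le a_k$ days, the remaining holidays inherit the same bound, and the frequencies of tasks $i < k$ are preserved.

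\emph{Case B ($a_1 = 2$).} Fold the schedule by two. By a standard swap argument, one may assume that task $1$ occupies precisely every other position in any loose schedule for $A$. Reading off only the non-task-$1$ positions produces a schedule for the reduced $(k{-}1)$-task instance $\tilde A = (\lfloor a_2 / 2 \rfloor, \ldots, \lfloor a_k / 2 \rfloor)$, and any holiday of $A$ projects to a holiday of $\tilde A$, so $\tilde A$ is loosely schedulable. Applying the IH gives $\tilde S$ with holidays at distance $\le 2^{k-1}$; unfolding by re-interleaving task $1$ then yields a schedule for $A$ with holidays at distance $\le 2 \cdot 2^{k-1} = 2^k$.

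\emph{Case C ($a_1 \ge 3$ and $a_k < 2^k$).} This is the crux. Neither reduction applies: task $k$ is too frequent to slot into only every other holiday of a smaller schedule, and folding by two is unavailable. A natural attempt is a pigeonhole argument inside the state graph $G_A$, using a $2^k$-valued projection of states such as $\phi_i(X) = \mathbf{1}[x_i \ge a_i/2]$, aiming to show that from any sustainable gap-state one can return to another gap-state within $2^k$ steps. Alternative routes include a generalized folding modulo $a_1$, or a reduction of Case~C to a finite list of explicit instances via the Pareto-surface finiteness results in \wref{sec:pareto-theorem}.

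The hard part will be Case~C. The $2^k$ bound is already tight for small $k$ — for instance, the loose instance $(2,5)$ for $k = 2$ admits no schedule with holidays closer than $4 = 2^2$ — so any proof must be sharp. A naive pigeonhole over the full state space of size $\prod_i a_i$ is too coarse, yet every refinement down to roughly $2^k$ classes must still interact correctly with feasibility along the chosen cycle. Identifying the right structural invariant, or a reduction that eliminates Case~C altogether, is where the core effort lies.
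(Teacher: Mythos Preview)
The statement you are addressing is stated in the paper as a \emph{conjecture}, not a theorem. The paper does not prove it; it only (i) establishes its equivalence with the Kernel Conjecture (\wref{conj:kernel-conjecture}) and (ii) reports computational evidence from the Pareto-surface experiments in \wref{sec:5/6Optimisations}. There is therefore no proof in the paper to compare your proposal against.

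Your proposal is likewise not a proof, and you acknowledge this: Case~C (\,$a_1\ge 3$ and $a_k<2^k$\,) is left entirely open, with only heuristic suggestions (a $2^k$-valued state projection, generalized folding, or reduction to a finite Pareto list). That is precisely where the difficulty lies, and none of the sketched approaches is developed far enough to assess. So the proposal has a genuine gap, which is simply that the conjecture remains unproved---consistent with its status in the paper.

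Your Cases~A and~B are reasonable partial reductions and look essentially correct, with one caveat. In Case~B the ``standard swap argument'' that normalizes task~$1$ to exactly every other position is well known for plain schedulability, but you are using it on a \emph{loose} schedule and need the holiday to survive the normalization; this is plausible but not entirely automatic and deserves an explicit check. Case~A is clean. Note, however, that even a fully rigorous treatment of Cases~A and~B would not change the overall status: Case~C is nonempty already at small~$k$ (e.g.\ $A=(3,4,5)$ for $k=3$), so the induction cannot close without it.
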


\begin{conjecture}[Kernel Conjecture]
\label{conj:kernel-conjecture}
\ifmysiam{~\\}{}\ifsiam{~\\}{}
	Let $A = (a_1,\ldots,a_k)$ be a schedulable Pinwheel Scheduling instance.
	Then there exists another Pinwheel Scheduling instance $A'=(a_1',\ldots,a_k')$ such that:
	
	\begin{enumerate}[label=(\alph*)]
	    \item $A'$ is also schedulable,
	    \item $A'$ dominates $A$, $A' \le A$, and
	    \item $a_k' \le 2^{k-1}$.
	\end{enumerate}
\end{conjecture}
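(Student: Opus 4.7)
My plan is to proceed by induction on the number of tasks $k$, possibly jointly with the $2^k$ Conjecture (\wref{conj:2^k-conjecture}), to which the Kernel Conjecture is closely linked: it is precisely the gaps in kernel schedules for fewer tasks that allow the insertion of a new task at frequency $2^{k-1}$. The base case $k=1$ is immediate, since $(1)$ is the only schedulable single-task instance, and $1 = 2^0$.

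For the inductive step, given schedulable $A = (a_1, \ldots, a_k)$, I would split on the value of $a_1$. When $a_1 \ge k$, the round-robin schedule $(1, 2, \ldots, k)$ certifies that $(k, k, \ldots, k) \le A$ is schedulable; since $k \le 2^{k-1}$ for all $k \ge 1$, this regime is handled. When $a_1 = 2$, restricting any schedule of $A$ to even-indexed days yields a valid schedule for the reduced $(k-1)$-task instance with frequencies $\lfloor a_i/2 \rfloor$ for $i = 2, \ldots, k$. Applying Kernel$(k-1)$ gives a dominating kernel $(b_2', \ldots, b_k')$ with $b_k' \le 2^{k-2}$; lifting back by alternating task $1$ with this kernel's schedule yields a valid schedule for $A' = (2, 2b_2', \ldots, 2b_k') \le A$, and $a_k' = 2 b_k' \le 2^{k-1}$.

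The main obstacle is the intermediate regime $3 \le a_1 \le k-1$, where the ``remaining days'' after scheduling task $1$ come in an uneven pattern: runs of $a_1 - 1$ consecutive available slots separated by isolated task-$1$ days. A naive compression yields a bound of roughly $\frac{a_1}{a_1 - 1} \cdot 2^{k-2}$, which exceeds the target $2^{k-1}$ whenever $a_1 \ge 3$. Overcoming this requires a sharper argument that exploits the uneven gap structure, or a different decomposition that does not hinge on the smallest frequency alone. Empirically, \wref{tab:pareto-surfaces} shows that Pareto-surface members with $a_1 \in \{3, 4\}$ still attain $a_k = 2^{k-1}$ --- witness $(3, 4, 6, 10, 16)$ for $k = 5$ --- so any proof must handle all values of $a_1$ uniformly and cannot afford a case-by-case loss.

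A complementary route is to strengthen the claim and show that every Pareto-surface member admits a periodic schedule of period at most $2^{k-1}$; since $a_k \le p$ for any period $p$ in which task $k$ appears at least once, this would imply the conjecture. This sharper statement is consistent with all entries of \wref{tab:pareto-surfaces}, but proving it seems to require constructing short cycles in the state graph $G_A$ directly, rather than merely reusing a given schedule, and integrating with \wref{conj:2^k-conjecture}'s claim about the density at which gaps can be forced.
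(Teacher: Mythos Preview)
This statement is \wref{conj:kernel-conjecture}, which the paper presents as an \emph{open conjecture}; there is no proof in the paper to compare against. The paper only establishes that \wref{conj:kernel-conjecture} and \wref{conj:2^k-conjecture} are equivalent, and supports both with the empirical checks reported in \wref{sec:5/6Optimisations}.

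Your proposal is, by your own admission, not a proof: the regime $3 \le a_1 \le k-1$ is left open. The two extreme cases you do treat are essentially sound, with one caveat. For $a_1 = 2$, the phrase ``restricting any schedule of $A$ to even-indexed days'' presupposes that task~$1$ occupies exactly the odd-indexed days, which an arbitrary valid schedule need not do; a short normalization argument (or working directly with the subsequence of non-task-$1$ days) is required before the halving reduction to $(\lfloor a_i/2\rfloor)_{i\ge 2}$ goes through.

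More importantly, the obstruction you state for the intermediate regime is arithmetically incorrect: $\frac{a_1}{a_1-1} \le 2$ holds for every integer $a_1 \ge 2$, so the bound $\frac{a_1}{a_1-1}\cdot 2^{k-2}$ is always at most $2^{k-1}$ and never exceeds the target. The genuine difficulty in extending the $a_1=2$ argument lies elsewhere. After placing task~$1$ on a regular grid of period $a_1$ and compressing the remaining days, you must argue both that the resulting $(k-1)$-task instance is schedulable \emph{and} that the lifted kernel dominates the original $A$. Neither follows from the hypothesis, because the given schedule for $A$ may place task~$1$ irregularly, and the compression and lifting maps do not compose to the identity on frequencies. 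Until this mismatch is resolved, the induction does not close, and the conjecture remains open---as it does in the paper.
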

We show that these two conjectures are indeed \emph{equivalent}.

\begin{proposition}[Equivalent conjectures]
	\wref{conj:2^k-conjecture} and \wref{conj:kernel-conjecture} are equivalent.
\end{proposition}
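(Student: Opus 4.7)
The plan is to first establish a convenient reformulation of \wref{conj:2^k-conjecture}: an instance $A = (a_1,\ldots,a_k)$ admits a schedule with a holiday at least every $m$ days if and only if the $(k+1)$-task instance obtained by appending a task of frequency $m$ is schedulable. One direction fills holidays with the new task; the other replaces the new task by holidays. Hence \wref{conj:2^k-conjecture} is equivalent to the statement: if $A$ is loosely schedulable, then $(a_1,\ldots,a_k,2^k)$ is schedulable. This equivalence will serve as the bridge between the two conjectures.

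For \wref{conj:2^k-conjecture} $\Rightarrow$ \wref{conj:kernel-conjecture}, I would induct on $m := |\{i \in [k] : a_i > 2^{k-1}\}|$. The base case $m=0$ takes $A' := A$. For the inductive step, observe that $a_k > 2^{k-1}$ (since $a_k$ is the largest); deleting task $k$ from any schedule of $A$ yields a loose schedule of $A_{k-1} := (a_1,\ldots,a_{k-1})$ in which the positions previously occupied by task $k$ become holidays. By \wref{conj:2^k-conjecture} applied to the $(k-1)$-task instance $A_{k-1}$, there is a schedule of $A_{k-1}$ with a holiday at least every $2^{k-1}$ days, and filling these holidays with a fresh task of frequency $2^{k-1}$ shows that $(a_1,\ldots,a_{k-1},2^{k-1})$ (re-sorted if needed) is schedulable. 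This instance dominates $A$ and has exactly $m-1$ frequencies exceeding $2^{k-1}$, so the inductive hypothesis finishes the job.

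For the reverse direction \wref{conj:kernel-conjecture} $\Rightarrow$ \wref{conj:2^k-conjecture}, I would start from a loosely schedulable $A$, pick some $M$ so that $(a_1,\ldots,a_k,M)$ is schedulable (exists by the reformulation), and apply \wref{conj:kernel-conjecture} to this $(k+1)$-task instance. This produces a schedulable $B'$ that dominates $(a_1,\ldots,a_k,M)$ and whose maximum frequency is at most $2^{(k+1)-1}=2^k$. Any schedule of $B'$ is also valid for $(a_1,\ldots,a_k,M)$, and in it every task---in particular the one at the sorted position of the appended frequency $M$---appears at least every $2^k$ days. Converting those occurrences back into holidays yields a schedule of $A$ with a holiday at least every $2^k$ days, as required.

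The main subtlety I foresee is the bookkeeping around sorting whenever the auxiliary frequency is inserted or removed---one must be careful about which position in the sorted $(k+1)$-task instance corresponds to the appended task after passing to the dominated instance $B'$. This is resolved by the fact that dominance $B' \le B$ is position-wise in sorted order and that schedulability depends only on the multiset of frequencies, so a schedule for $B'$ can be relabelled by the identity on sorted positions to give a valid schedule of $(a_1,\ldots,a_k,M)$ in which the appended task cleanly plays the role of holidays for $A$.
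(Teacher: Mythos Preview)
Your proposal is correct. Both directions match the paper's argument in spirit, with one noteworthy difference in the forward implication.

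For \wref{conj:2^k-conjecture} $\Rightarrow$ \wref{conj:kernel-conjecture}, the paper removes \emph{all} $\ell$ tasks with $a_i > 2^{k-1}$ at once, applies \wref{conj:2^k-conjecture} to the remaining $(k-\ell)$-task instance to obtain a gap every $2^{k-\ell}$ days, and then Round-Robins the $\ell$ removed tasks into those gaps; this requires the arithmetic check $\ell \cdot 2^{k-\ell} \le 2^{k-1}$. Your induction on the number of large frequencies handles them one at a time and thereby sidesteps that inequality entirely, which is arguably cleaner, at the cost of an inductive wrapper. The reverse direction is essentially identical to the paper's, except that the paper avoids your sorting subtlety altogether by choosing the appended frequency as $\max\{g,a_k\}$ (so it is automatically the largest and sits at position $k+1$); your final paragraph correctly resolves the same issue in the general case.
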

\begin{proof}
	First assume \wref{conj:2^k-conjecture} holds true.
	Let $A = (a_1,\ldots,a_k)$ be an arbitrary schedulable instance.
	If $a_k\le 2^{k-1}$, we can set $A'=A$; so assume that the last $\ell\ge 1$ frequencies are 
	$a_k,a_{k-1},\ldots,a_{k-\ell+1} > 2^{k-1}$.
	Define $B=(a_1,\ldots,a_{k-\ell})$.
	$B$ is loosely schedulable since
	we can use the same schedule as for $A$ and replace all occurrences of $i > k-\ell$ by ``--''.
	By \wref{conj:2^k-conjecture}, $B$ then admits a gapped schedule $S$ with a gap every $2^{k-\ell}$ days.
	We claim that we can now set $A' = (a_1,\ldots,a_{k-\ell},2^{k-1},\ldots,2^{k-1})$,
	\ie, truncate all frequencies exceeding $2^{k-1}$ at $2^{k-1}$ and remain schedulable.
	We use $S$ and assign the $\ell$ tasks $k-\ell+1,\ldots, k$ in a Round-Robin fashion to the gap
	in $S$. This achieves frequency at most $\ell\cdot 2^{k-\ell}$, for each of these tasks.
	We check indeed $\ell\cdots 2^{k-\ell} \le 2^{k-1}$, for all $k$ and $\ell\in\N$.
	
	Now conversely assume that \wref{conj:kernel-conjecture} holds true.
	Let $A = (a_1,\ldots,a_k)$ be loosely schedulable.
	Let $S$ be a gapped schedule for $A$ and set $g$ to the frequency of the gap in $S$.
	Define $B = (b_1,\ldots,b_{k+1})$ with $b_i = a_i$ for $i\in[k]$ and $b_{k+1}=\max\{g,a_k\}$.
	$B$ is schedulable since we can replace the gap in $S$ by $k+1$ and obtain a schedule for $B$.
	Hence by \wref{conj:kernel-conjecture}, there is a schedulable instance
	$B' = (b_1',\ldots,b_{k+1}')$ with $b_i' \le b_i$ and $b_i' \le 2^k$.
	Let $S'$ be a schedule for $B'$. By replacing $k+1$ in $S'$ by ``--'', we obtain a valid
	gapped schedule for $A$ and since $b_{k+1}' \le 2^k$, the gap frequency in this schedule is at most $2^k$.
\end{proof}

In light of this, the evidence in support of \wref{conj:kernel-conjecture} that we provide in in \wref{sec:5/6Optimisations} equally supports \wref{conj:2^k-conjecture}.

\section{The Pareto Surface}
\label{sec:pareto-theorem}

\begin{figure*}
    \centering
    \includegraphics[width=.95\linewidth]{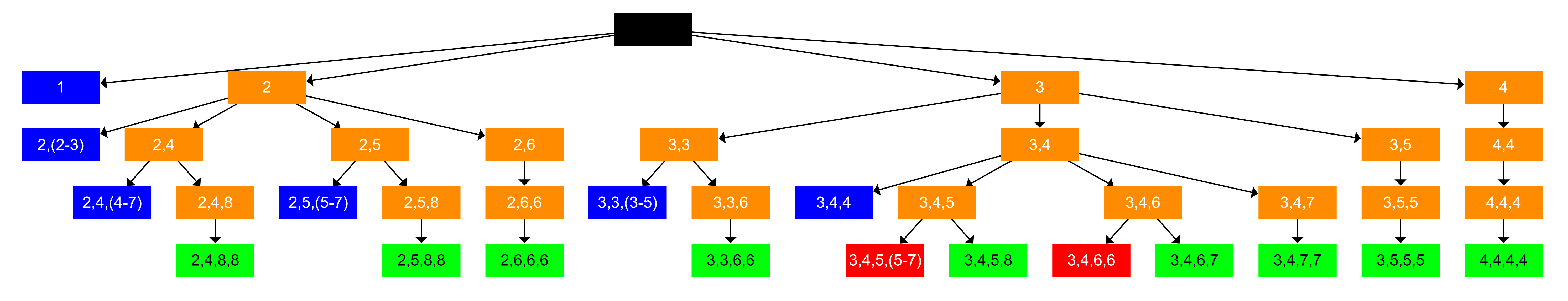}
    \caption{The portion of the Pareto trie $\mathcal T_4$ for $k=4$ that is explored by the traversal from \wref{sec:pareto-trie}. Nodes show the corresponding Pinwheel Scheduling instance or instances. The black root node is the empty Pinwheel Scheduling problem; blue nodes are tightly schedulable; orange nodes are loosely schedulable; red nodes are unschedulable and green nodes are schedulable, complete and form the Pareto surface. Solutions to the green nodes are shown in \wpref{tab:pareto-surfaces}.}
    \label{fig:pareto-surface-for-n=4}
\end{figure*}

We now derive our main new structural tool for analyzing Pinwheel Scheduling:
the notion of Pareto surfaces.
A special case of such a Pareto surface is (implicitly) used in~\cite{5over6upto5tasks}
(without developing its general applicability).

To this end, we need some more vocabulary.
It is often helpful to reduce Pinwheel schedules to their \emph{recurrence vectors}~-- the Pinwheel Scheduling instance solved by the schedule which minimizes $a_i$ for all $i$. To this end, we call a tuple consisting of a schedule and its recurrence vector a \emph{scheduled Pinwheel Scheduling problem}.
Let $\mathcal A$ be a (finite or infinite) set of Pinwheel Scheduling instances.
We say that a (finite or infinite) set of scheduled Pinwheel Scheduling problems $\mathcal S$ \emph{solves} $\mathcal A$
if, for every problem $A\in\mathcal A$, there is some $S\in\mathcal S$ so that $S$ includes a valid schedule for $A$.
This is equivalent to saying that every problem in $\mathcal A$ is dominated by some problem in $\mathcal S$.
A \emph{Pareto surface} $\mathcal C = \mathcal C(\mathcal A)$ 
for a set of Pinwheel Scheduling instances $\mathcal A$ is an inclusion minimal set of scheduled Pinwheel Scheduling problems that solves $A$, 
\ie, for every $A\in \mathcal A$ there is $C\in \mathcal C$ with $C\le A$.
We use \emph{inclusion minimal} to mean that no member of a set can be be removed from that set without violating its defining property, \ie, for every $C$ in $\mathcal C$ there most be some $A$ that is not solved by any other member of $\mathcal C$.
Note that while we consider only finite values of $\mathcal C(\mathcal A)$, $\mathcal A$ need not be finite, per \wref{thm:finite-pareto}.

The Pareto surfaces of two families of sets of Pinwheel Scheduling instances
are of particular interest:
$\mathcal P_k$, by which we denote the class of all Pinwheel Scheduling instances with $k$ tasks,
and $\mathcal P_{k,d}$, by which we denote the Pinwheel Scheduling instances with $k$ tasks and density at most $d$.
The main result of this section is the following theorem.

\begin{theorem}[Finite Pareto surfaces]
\label{thm:finite-pareto}
\ifsiam{~\\}{}
	For every $k\in\N$, there is a finite set of periodic schedules such that every Pinwheel Scheduling instance with $k$ tasks has a solution if and only if it has a solution in that set.
	Moreover, there is a unique inclusion-minimal such set~$\mathcal C_k$.
\end{theorem}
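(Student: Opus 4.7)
The plan is to exploit that schedulability is monotone under the componentwise domination order: if $A \le B$ and $S$ is a valid schedule for $A$, then $S$ is also valid for $B$, since every length-$b_i$ window of $S$ contains a length-$a_i$ sub-window, which by validity contains task $i$. Hence the set $\mathcal S_k := \{A \in \N^k : A \text{ is schedulable}\}$ is upward-closed in $(\N^k, \le)$. The heart of the argument is an appeal to Dickson's lemma: $(\N^k, \le)$ is a well-partial-order, so any subset has only finitely many $\le$-minimal elements. Applied to $\mathcal S_k$, this yields a finite antichain $A_1, \ldots, A_m$ of minimal schedulable $k$-instances.

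For each $A_j$, I would pick any valid periodic schedule $S_j$ (which exists by \wref{pro:oracle}). Its recurrence vector $A_j'$ satisfies $A_j' \le A_j$ and is itself schedulable (witnessed by $S_j$), so minimality of $A_j$ in $\mathcal S_k$ forces $A_j' = A_j$. Setting $\mathcal C_k := \{(S_j, A_j) : j = 1, \ldots, m\}$ gives a finite set of scheduled Pinwheel problems that solves every schedulable $k$-instance $B$: some $A_j \le B$ exists, so $S_j$ is valid for $B$; conversely, no set of scheduled problems can solve an unschedulable instance. Inclusion-minimality is immediate, because removing $(S_j, A_j)$ would leave $A_j$ uncovered: any surviving entry $(S_i, A_i)$ with $A_i \le A_j$ would contradict the incomparability of distinct minimal elements.

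For uniqueness, I would show that in any inclusion-minimal Pareto surface $\mathcal C$, the set of recurrence vectors must be exactly $\{A_1, \ldots, A_m\}$. If $(S, A) \in \mathcal C$ had $A$ non-minimal in $\mathcal S_k$, then picking $A^* \in \mathcal S_k$ with $A^* < A$, whichever entry of $\mathcal C$ covers $A^*$ has recurrence vector $\le A^* < A$ and hence covers every instance that $(S, A)$ covers, rendering $(S, A)$ redundant. Conversely, each minimal $A_j$ can only be covered by an entry whose recurrence vector equals $A_j$. Thus uniqueness of $\mathcal C_k$ holds at the level of recurrence vectors, with exactly one representative scheduled problem per minimal instance. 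The main (and essentially only) obstacle is recognising this as a well-quasi-ordering question; the order-theoretic bookkeeping afterwards is routine.
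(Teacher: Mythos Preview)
Your argument is correct, and in fact your uniqueness proof is essentially identical to the paper's: \wref{lem:characterization-pareto} is precisely the statement that $A\in\mathcal C_k$ iff $A$ is a $\le$-minimal element of the schedulable set $\mathcal S_k$, and the paper derives uniqueness from that in the same way you do.

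The genuine difference lies in the finiteness part. You invoke Dickson's lemma: $\mathcal S_k$ is upward-closed in the well-partial-order $(\N^k,\le)$, hence has finitely many minimal elements. The paper instead gives a constructive argument via the Pareto trie (\wref{sec:pareto-trie}): it explores prefixes $(a_1,\ldots,a_\ell)$ and, whenever such a prefix is loosely schedulable with gap frequency $g$, bounds the next coordinate by the Round-Robin frequency $(k-\ell)g$, so only finitely many children need be visited at each level. This yields an explicit algorithm for enumerating $\mathcal C_k$ together with a (doubly exponential) bound on $m(k)$, the largest frequency appearing in any $A\in\mathcal C_k$. Your proof, by contrast, is shorter and cleanly isolates the order-theoretic content, but is non-constructive: Dickson's lemma gives no effective bound on the minimal elements and no algorithm for finding them, which the paper needs both for \wref{cor:fpt} (where a computable kernel map $A\mapsto\kappa(A)$ is required) and for the computational work in \wref{sec:5-6-surface}. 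So your route establishes the theorem as stated, while the paper's route additionally supplies the algorithmic infrastructure the rest of the paper relies on.

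One small caveat worth tidying: both you and the paper are a bit loose about whether $\mathcal C_k$ is a set of instances or of scheduled problems. Uniqueness holds only at the level of recurrence vectors (two distinct periodic schedules can share the same recurrence vector), and your closing sentence already acknowledges this.
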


Before proceeding with the proof of this theorem in the following two subsections, let us note a complexity-theoretic consequence of this result.

\begin{corollary}[Pinwheel is FPT]
\label{cor:fpt}
\ifsiam{~\\}{}
	Pinwheel Scheduling is fixed-parameter tractable with respect to the number of tasks $k$.
\end{corollary}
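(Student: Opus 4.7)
The plan is to reduce the corollary essentially to a look-up in the finite object promised by \wref{thm:finite-pareto}. Recall that FPT with parameter $k$ requires an algorithm running in time $f(k)\cdot n^{O(1)}$, where $n=|A|=O(k\log a_k)$ is the bit length of the encoded instance. The main work has already been done in \wref{thm:finite-pareto}: for every $k$ there exists a \emph{finite} set $\mathcal C_k$ of periodic scheduled Pinwheel problems such that any instance $A$ with $k$ tasks is schedulable if and only if it is solved by some member of $\mathcal C_k$. The size $|\mathcal C_k|$ and the entries of its members depend only on $k$, not on $A$.

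First I would strip each scheduled problem $(S,B)\in\mathcal C_k$ down to its recurrence vector $B=(b_1,\ldots,b_k)$, which is the component-wise minimal Pinwheel instance for which $S$ is still a valid schedule. By the discussion preceding \wref{thm:finite-pareto}, ``$\mathcal C_k$ solves $A$'' is equivalent to ``some member of $\mathcal C_k$ dominates $A$'', i.e., to the existence of $(S,B)\in\mathcal C_k$ with $B\le A$ componentwise (after sorting both vectors, which is \withoutlossofgenerality possible since schedulability is invariant under renaming tasks). The algorithm is therefore: on input $(A,k)$, sort $A$, and for each $(S,B)\in\mathcal C_k$ (whose sorted recurrence vectors can be hard-coded as a table indexed by $k$) test whether $b_i\le a_i$ for all $i\in[k]$; return YES iff some test succeeds, in which case the associated $S$ is returned as a witness.

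The running time splits into a preprocessing cost that depends only on $k$~-- producing or retrieving $\mathcal C_k$~-- and a per-instance cost of $O(|\mathcal C_k|\cdot k)$ integer comparisons, each on numbers of bit length $O(\log a_k)$. Sorting costs $O(k\log k\cdot\log a_k)$ further bit operations. Setting $f(k):=|\mathcal C_k|$ (plus whatever time is needed once and for all to materialize $\mathcal C_k$), the overall running time is $f(k)\cdot\mathrm{poly}(n)$, which is precisely the FPT requirement. The ``hard part'' of the argument is therefore entirely contained in \wref{thm:finite-pareto}; the only subtlety left for this corollary is to observe that the proof of \wref{thm:finite-pareto} is constructive (so that $\mathcal C_k$ is \emph{computable} from $k$, not merely guaranteed to exist), which the Pareto-trie traversal of \wref{sec:pareto-trie} provides. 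I do not expect any genuine obstacle beyond being explicit about this computability and about sorting conventions when comparing recurrence vectors to the input.
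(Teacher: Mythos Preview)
Your proposal is correct and takes essentially the same approach as the paper: reduce schedulability to a dominance look-up against the finite, computable set $\mathcal C_k$ from \wref{thm:finite-pareto}, with computability supplied by the Pareto-trie traversal of \wref{sec:pareto-trie}. The one cosmetic difference is that the paper first caps every input frequency at $m(k)$ (the largest frequency appearing in any recurrence vector of $\mathcal C_k$) before comparing, which yields the slightly stronger additive bound $O(N+f(k))$ and the explicit FPT-kernel noted in the subsequent remark, whereas your direct comparison gives the equally sufficient multiplicative form $f(k)\cdot\mathrm{poly}(N)$.
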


Note that the \emph{input size} $N$ of a Pinwheel Scheduling instance can be substantially larger than $k$ since it has to encode the frequencies (say, in binary); 
frequencies at least exponential in $k$ are necessary even for just the instances 
in $\mathcal C_k$, and in general $N$ is not bounded in terms of $k$.

\begin{proofof}{\wref{cor:fpt}}
We give an algorithm deciding any instance $A\in\mathcal P_k$ in time $O(N+f(k))$ for $N$ the encoding length of $A$ and $f$ some computable function; this implies the claim.
By \wref{thm:finite-pareto}, all of $\mathcal P_k$ is solved by $\mathcal C_k$. 
Let $m(k)$ be the maximum over all distances between consecutive occurrences of all task in any of these solutions. 
We first compute $\mathcal C_k$ and $m(k)$; as $\mathcal C_k$ only depends on $k$, 
the cost to do so is bounded by some function $g(k)$. 
Read the input $A=(a_1,\ldots,a_k)$ (at cost $O(N)$), and replace any frequency $a_i > m(k)$ with $m(k)$, producing a new Pinwheel Scheduling instance $A'=\kappa(A)$ of (encoding) size $N' = O( k \log m(k) )$.
Now $A'$ is schedulable iff $A$ is schedulable, because any $S\in\mathcal C_k$ solves $A'$ iff it solves $A$.
Comparing $A'$ with all $S\in\mathcal C_k$ for some cost $h(k)$ determines whether there exists a schedule to $A$, so the schedulability of any input $A$ can be determined for a cost 
$O(g(k) + N + h(k))$. 
\end{proofof}

\begin{remark}[Kernel size]
	The construction above indeed shows that Pinwheel Scheduling has an FPT-kernel
	of size $O(k \log(m(k)))$; assuming \wref{conj:kernel-conjecture}, this reduces to $O(k^2)$.
\end{remark}

\subsection{The Pareto Trie}

\label{sec:pareto-trie}
Towards the proof of the first claim of \wref{thm:finite-pareto},
we describe an algorithm to compute the Pareto surface $\mathcal C_k$ for a given $k$,
based on an oracle for deciding whether a given Pinwheel Scheduling instance 
is infeasible, tightly schedulable or loosely schedulable (cf.~\wref{pro:oracle}).
We describe our implementation of such an oracle in \wref{sec:algorithms}.

This algorithm conceptually explores an (infinite) trie $\mathcal T_k$ for $\mathcal P_k$, where each node is the Pinwheel Scheduling instance that is a prefix of all of its descendants:
the root of $\mathcal T_k$ corresponds to the empty instance with no tasks at all.
It has infinitely many children, reached through edges labeled $1,2,3,\ldots{}$.
In general, every node $v$ at depth less than $k$ has infinitely many children; 
if $v$ is reached from its parent by an edge labeled $a$, $v$ has children $a,a+1,a+2,\ldots{}$.
We identify a node $v$ in the trie with the sequence of edge labels on the path from the root to $v$.
In this way, each node $v$ at depth $\ell$ corresponds to a Pinwheel Scheduling instance
on $\ell$ tasks.

Our algorithm explores $\mathcal T_k$ using a depth-first search.
Since $\mathcal T_k$ has depth $k$, we can only descend in the tree $k$ times;
however, we have to show that we only need to explore finitely many children of any node.
Suppose we are currently visiting a node $v$ at depth $\ell<k$, corresponding to a
Pinwheel Scheduling instance $A=(a_1,\ldots,a_\ell)$.
If $A$ is infeasible or tightly schedulable, all of $v$'s descendants are infeasible;
in particular, none of the descendants at depth $k$~-- corresponding to extensions of $A$ to instances in $\mathcal P_k$~-- is schedulable. So we need not visit any of them.

%\needspace{3\baselineskip}

If $A$ is loosely schedulable, some descendant at depth $k$ is guaranteed to be feasible:
Let $S$ be a gapped schedule for $A$ with some gap frequency $g$.
Then $B=(a_1,\ldots,a_\ell, (k-\ell) g,\ldots, (k-\ell) g)$~-- 
\ie, $A$ with $k-\ell$ copies of $(k-\ell) g$ appended~--
is solved by aschedule $S'$ obtained from $k-\ell$ copies of $S$, with the gap replaced by $\ell+1,\ldots,k$, respectively.
Hence, we need not visit any child of $v$ with label larger than $(k-\ell) g$ (they are all feasible and dominated by $B$), and in particular, we only visit finitely many children of $v$.
For later reference, we call the smallest frequency $f$ so that $(a_1,\ldots,a_\ell,f,\ldots,f) \in \mathcal P_k$ 
is schedulable the \emph{Round Robin frequency} of $A$ with respect to $k$.
The root can be treated as a loosely schedulable node with a gap frequency $g=1$, depth $\ell=0$, and Round Robin frequency $k$. 
An example for the Pareto trie for $k=4$ is shown in \wref{fig:pareto-surface-for-n=4}.

As the Pareto trie is searched, an inclusion minimal subset of its leaves is maintained~-- after the search is complete, this will be $\mathcal C_k$.

\begin{remark}[Periodic solution length]
	We can prove a bound on $m(k)$, the largest frequency in any instance of $\mathcal C_k$, using the trie and \wref{pro:oracle}.
	For a loosely feasible instance $A=(a_1,\ldots,a_\ell)$, we can always find a schedule of length $\le \prod_{i=1}^{\ell} a_i$ with at least one gap, so 
	$g\le \prod_{i=1}^\ell a_i$. Hence, $A$'s Round Robin frequency $f$ is 
	$f\le (k-\ell)\prod_{i=1}^\ell a_i$, which gives an upper bound for all expansions $a_{\ell+1}$ to $A$ that can occur in~$\mathcal C_k$.
	We hence always have $m(k) \le m_k$, where
	$m_1=1$, $m_{\ell+1} = (k-\ell)\prod_{i=1}^\ell m_i$. 
	Since $m_k = \Omega(2^{2^k})$, this proven upper bound for $m(k)$ 
	is \emph{doubly} exponential in $k$, 
	whereas \wref{conj:2^k-conjecture} suggests 
	the singly exponential bound $m(k) \le 2^{k-1}$ is sufficient. 
	$m(k)\ge 2^{k-1}$ is also clearly necessary, so assuming \wref{conj:2^k-conjecture}
	would completely settle the question of the worst-case periodic solution~length.

\end{remark}

\subsection{Uniqueness}

Next we prove that $\mathcal C_k$ is unique.

\begin{lemma}[Characterization Pareto surface]
\label{lem:characterization-pareto}
	$A\in\mathcal C(\mathcal  P_k)$ iff $A$ is schedulable and decreasing any one component of $A$ by 1 makes the instance infeasible.
\end{lemma}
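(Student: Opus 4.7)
The plan is to first convert the single-coordinate condition in the lemma into a cleaner global Pareto-minimality statement, and then characterize membership in $\mathcal{C}_k$ via it. The preliminary equivalence I would establish is: ``decreasing any one component of $A$ by $1$ yields an infeasible instance'' is equivalent to ``no $A' < A$ (in the coordinatewise product order) is schedulable''. One direction is immediate. For the other, if $A' < A$ is a schedulable strict dominator, pick any coordinate $j$ with $a_j' < a_j$ and consider the instance $A_j$ obtained from $A$ by decreasing $a_j$ by $1$; then $A' \le A_j < A$, so $A_j$ inherits schedulability from $A'$, contradicting the single-decrease hypothesis. This reduces the lemma to: $A \in \mathcal{C}_k$ iff $A$ is schedulable and Pareto-minimal in the product order.

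The ($\Leftarrow$) direction is the easy half, and I would dispatch it first. Since $\mathcal{C}_k$ solves all of $\mathcal{P}_k$, some $C \in \mathcal{C}_k$ satisfies $C \le A$. Being in $\mathcal{C}_k$, $C$ is schedulable, and a strict $C < A$ would violate Pareto-minimality of $A$, so $C = A$ and hence $A \in \mathcal{C}_k$.

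The ($\Rightarrow$) direction is the substantive part; the main obstacle is that inclusion-minimality is phrased in terms of removability of members rather than minimality of individual members, so I need to produce a concrete removal witness. Assume $A \in \mathcal{C}_k$ is not Pareto-minimal. I would iteratively decrease coordinates of $A$ one at a time, at each step picking a single-coordinate decrease that preserves schedulability; such a decrease exists as long as the current instance is not Pareto-minimal, by the equivalence above. Since frequencies are positive integers and each step strictly reduces their sum, the process terminates in a Pareto-minimal schedulable instance $A^* < A$. The already-proven ($\Leftarrow$) direction then places $A^*$ in $\mathcal{C}_k$. Finally, $A$ is redundant: every $B \in \mathcal{P}_k$ with $A \le B$ also satisfies $A^* < A \le B$, so $A^*$ alone solves $B$; hence $A$ can be removed from $\mathcal{C}_k$ without breaking the solving property, contradicting inclusion-minimality. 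As a corollary, this characterization identifies $\mathcal{C}_k$ with the (unique) set of schedulable Pareto-minimal instances in $\mathcal{P}_k$, which will also establish the uniqueness claim in \wref{thm:finite-pareto}.
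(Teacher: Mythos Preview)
Your proof is correct and follows essentially the same approach as the paper: both directions hinge on the solving property of $\mathcal{C}_k$ to produce a dominating member of $\mathcal{C}_k$, then argue equality (for $\Leftarrow$) or redundancy (for $\Rightarrow$). The paper's $(\Rightarrow)$ is slightly more direct---it applies the solving property to $A_{i-}$ immediately to obtain the witness $B\in\mathcal{C}_k$ with $B\le A_{i-}<A$, bypassing your iterative descent to a Pareto-minimal $A^*$.
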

\begin{proof}
Let $A\in\mathcal C(\mathcal P_k)$~-- it is then schedulable by definition. 
Define $A_{i-} = (a_1,\ldots,a_{i-1},a_i-1,a_{i+1},\ldots,a_k)$.
Assume towards a contradiction that there is a task $i\in[k]$ so that $A_{i-}$ is also schedulable.
Since $A_{i-}$ is schedulable there must be some $B\in\mathcal C(\mathcal P_k)$ with $B\le A_{i-}$ and $B$ schedulable.
But then also $B\le A_{i-}\le A$, so anything dominated by $A$ is also dominated by $B$ and we can remove $A$ from the Pareto surface; a contradiction.

\needspace{3\baselineskip}

Let conversely $A$ be schedulable and for all $i\in[k]$, $A_{i-}$ is infeasible.
By definition of $\mathcal C(\mathcal P_k)$, there is a schedulable instance 
$B=(b_1,\ldots,b_k)\in\mathcal C(\mathcal P_k)$
with $B\le A$. Assume towards a contradiction that there is an index $i\in[k]$ with $b_i < a_i$;
then we have $B\le A_{i-}$. Since $A_{i-}$ is already infeasible, so must $B$ be; a contradiction.
\end{proof}

\needspace{3\baselineskip}

The uniqueness claim from \wref{thm:finite-pareto} now follows immediately from \wref{lem:characterization-pareto}:
$\mathcal C(\mathcal P_k)$ consists of exactly those instances that satisfy the condition of that lemma.

Note that $\mathcal P_{k,d}$ does not in general have a unique Pareto surface;
for example, $\mathcal P_{2,2/3}$ has Pareto surfaces $\{(2,2)\}$ and $\{(2,6),(3,3)\}$.
Both dominate all instances with density at most $2/3$ but fail to do so after deleting any one element of the sets, meeting the definition for $\mathcal C(\mathcal P_{k,d})$. In this instance, the former ($\mathcal C_2$) dominates the latter but that does not disqualify it as a Pareto surface. 
Clearly $\mathcal C_k$ is also a Pareto surface for $\mathcal P_{k,d}$, so finite
$\mathcal C(\mathcal P_{k,d})$ always exist.

\section{Engineering Pinwheel Scheduling}
\label{sec:algorithms}
\label{section: backtrackingAlgorithm}

In this section, we introduce our backtracking algorithm for general Pinwheel Scheduling instances, with three consecutive stages building on each other: the Naïve algorithm, the Optimised algorithm, and finally the Foresight algorithm. The effects of each optimisation are discussed in \wref{sec:evaluation}.
\ifproceedings{%
    Some correctness proofs are deferred to the appendix of extended online version, and the code is available online \cite{codeAlenex22}.
}{%
    Correctness proofs are given in \wref{app:proofs}.
}%

\subsection{The Naïve Algorithm}
\label{sec:Naive}
Each of the three algorithms presented here use a backtracking procedure to assess the schedulability of Pinwheel Scheduling instances. They form all possible solutions into a trie of \emph{candidate solution prefixes} ($S_c$), which they explore using four basic operations:

\begin{enumerate}
    \item Push, which appends the next unexplored letter to~\(S_c\).
    \item Pop, which deletes the last letter from \(S_c\).
    \item Failure testing, which tests whether the current state is valid.
    \item Success testing, which tests whether the current state is known to be sustainable.
\end{enumerate}

Whenever a node is reached, they test each for failure, then for success. If a node is invalid, the pop operation is employed until there is an unexplored letter to push. Nodes pass the success test if their state is the same as some ancestral node~-- the path from that node to this node is a solution $S$. If a node is valid, but not known to be sustainable the push operation is again employed. 
\ifproceedings{%
    A diagram of this procedure is shown in the appendix of the extended online version, along with several worked examples.
}{%
    A diagram of this procedure is shown in \wref{app:naive-worked-examples}, along with several worked examples.
}%

In the naïve algorithm, tasks are pushed in descending frequency order. In schedulable cases, this reduces the observed length of failed candidate solutions attempted before finding a viable schedule, thus reducing success testing cost. This seemed to reduce overall cost in many cases, probably because success testing is \(O(n^2)\) (where $n$ is the length of the testable solution fragment~-- naively $\len{(S_c)}$, but optimised in \wref{sec:soln-length}) and failure testing is \(O(k)\). 
\ifproceedings{%
    An example of this difference is described in the appendix of the extended online version.
}{%
    An example of this difference is described in \wref{fig:Naive 6,3,3}.
}%
Note that the first move can be freely chosen, as ultimately all tasks must be a part of the final schedule.

\subsection{The Optimised Algorithm}
\label{sec:optimisations}

The Optimised algorithm expands the Naïve algorithm described above with three improvements that remove repetitive and symmetric sections of the search space and reduce the cost of success testing.

\subsubsection{Repetition}
\label{subsubsection: repetition}

We first establish two simple properties when comparing different states.
If we consider two states of the system, $X$ and $X'$, then $X$ is \emph{worse} than $X'$ if no $x_i$ is less than the corresponding $x'_i$ and some $x_i$ is greater than the corresponding $x'_i$. Formally: \(\forall i: x_i \geq x'_i\) and \(\exists i\) such that \(x_i > x'_i\). $X'$ is then considered to be \emph{better} than $X$. 

\begin{lemma}
\label{lem:worse-schedulable}
    If some state \(X\) in a Pinwheel Scheduling instance $A$ is worse than another state \(X'\) in the same instance, then any valid schedule for $A$ starting in state \(X\) is also a valid schedule for $A$ starting in state \(X'\). 
\end{lemma}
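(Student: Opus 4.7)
The plan is to take any valid infinite schedule $S_\infty = s_1 s_2 \cdots$ for $A$ starting from $X$ and to show that the very same sequence of symbols remains valid when read out from $X'$. The natural way to do this is to couple the two runs: let $X_t$ and $X'_t$ denote the states reached after $t$ steps of $S_\infty$ when started from $X_0 = X$ and $X'_0 = X'$, respectively. I will prove by induction on $t$ that $X_t$ is worse than or equal to $X'_t$, i.e., $(X_t)_i \geq (X'_t)_i$ for every coordinate $i$.

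The base case $t = 0$ is exactly the hypothesis that $X$ is worse than $X'$. For the inductive step, the state-transition rules from \wref{sec:prelims} apply uniformly to both runs: if $s_{t+1}$ is a task $j$, then both $X_{t+1}$ and $X'_{t+1}$ reset their $j$-th coordinate to $0$ and increment every other coordinate by $1$; if $s_{t+1}$ is a gap, then both runs increment every coordinate by $1$. Either operation preserves the componentwise $\geq$ relation (the zeroed coordinate contributes an equality, which is harmless), so the inductive invariant is maintained.

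The conclusion is then immediate. By assumption $S_\infty$ is valid from $X$, so every state $X_t$ satisfies $(X_t)_i < a_i$ for all $i$. Combined with the invariant $(X'_t)_i \leq (X_t)_i$, this gives $(X'_t)_i < a_i$ for all $t$ and $i$, which is precisely the condition that $S_\infty$ induces only valid states when started from $X'$; hence $S_\infty$ is a valid schedule from $X'$ as well. There is no substantive obstacle here; the argument is a one-line monotonicity/coupling observation, namely that the same sequence of events produces a ``less urgent'' trajectory when it begins from a better state. The only care needed is to verify the invariant against both types of transitions (task and gap), which is routine.
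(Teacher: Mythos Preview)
Your proof is correct and follows essentially the same approach as the paper: a coupled induction over time showing that the componentwise inequality $(X'_t)_i \le (X_t)_i$ is preserved by each transition, whence validity of the trajectory from $X$ implies validity from $X'$. If anything, you are slightly more careful than the paper in explicitly verifying the invariant for gap transitions as well as task transitions.
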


\begin{lemma}
\label{lem:worse-unschedulable}
    If some state $X$ is worse than another state $X'$ and there exists no valid schedule that starts at $X'$ then there exists no valid schedule that starts at $X$.
\end{lemma}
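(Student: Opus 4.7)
The plan is to obtain this lemma as the immediate contrapositive of \wref{lem:worse-schedulable}. First I would assume, toward a contradiction, that there exists a valid schedule $S$ starting from state $X$. Then I would invoke \wref{lem:worse-schedulable} to conclude that $S$ is also a valid schedule when started from state $X'$, contradicting the hypothesis that no valid schedule starts at $X'$. Hence no valid schedule can start from $X$ either.

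The only care required here is to check the orientation of the hypotheses: the current statement and \wref{lem:worse-schedulable} both take $X$ worse than $X'$ (\ie, $x_i \geq x'_i$ for all $i$, strict for some $i$), so the implication ``schedule valid from $X$ $\Rightarrow$ schedule valid from $X'$'' transfers by straight contraposition to ``no schedule valid from $X'$ $\Rightarrow$ no schedule valid from $X$'', with no relabeling of roles.

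If one preferred a self-contained proof instead of an appeal to the previous lemma, essentially the same monotonicity argument runs directly: were $S$ valid from $X$, then for each task $i$ and each time $t$ the elapsed time since the last occurrence of $i$ (measured from $X$) is at most $a_i$; since $x_i \geq x'_i$, the elapsed time measured from $X'$ is only shorter, so $S$ remains valid from $X'$. I do not expect any real obstacle: the substantive content has already been captured by the definition of ``worse'' and by \wref{lem:worse-schedulable}, and this lemma is a one-line corollary used to justify the pruning rule in \wref{subsubsection: repetition}.
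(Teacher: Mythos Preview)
Your proposal is correct and matches the paper's approach: both argue by contradiction, assuming a valid schedule from $X$ and deriving one from $X'$. Your explicit appeal to \wref{lem:worse-schedulable} is in fact slightly cleaner than the paper's version, which inlines the same monotonicity argument rather than citing the lemma.
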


Avoiding immediate repetitions which are not themselves solutions shrinks the search space without changing the schedulability of Pinwheel Scheduling problems.
This optimisation is based on the following observation.

\begin{proposition}[Repetition]
\label{pro:repetition}
    If an instance of Pinwheel Scheduling $A$ has a solution that contains an immediately repeated strict subsequence (\ie, for some sequence of letters $r$, a repetitive solution $S_r$ exists, such that $S_r = \ldots, r, r, \ldots \in \mathcal S$ and $r$ is not a solution to $A$) then there also exists a solution that does not contain that immediately repeated strict subsequence. 
\end{proposition}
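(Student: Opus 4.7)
The plan is a minimality argument: fix a valid periodic schedule $S^*$ of $A$ of minimum period length over all valid periodic schedules, and show that $S^*$ cannot contain $rr$ as a cyclic substring of its period; $S^*$ itself will then be the solution witnessing the claim.

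Suppose for contradiction that $rr$ occurs as a cyclic substring of $S^*$. A preliminary step rules out the short-period case $|S^*| \le 2|r|$: since $r$ is not a solution, $r$ must omit at least one task $i^* \in [k]$ (otherwise the validity of the second copy of $r$ inside this $rr$ occurrence would make $r^\infty$ a valid periodic schedule of $A$, so $r$ itself would be a solution). But $S^*$ is valid, so $i^*$ appears in every length-$|S^*|$ window of $S^*$'s infinite extension; if $|S^*| \le 2|r|$ the $rr$ window (of length $2|r| \ge |S^*|$) covers a full period of $S^*$ and must contain $i^*$, contradicting $i^* \notin r$. Hence $|S^*| > 2|r|$, and after a suitable cyclic rotation of the period I may write $S^* = \sigma_1 r r \sigma_2$.

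Let $X_{\mathrm{start}}$ denote the start-equals-end state of the period, $X_A$ the state after applying $\sigma_1 r$, and $X_B$ the state after applying $\sigma_1 r r$. For each task $i$ occurring in $r$, both $X_{A,i}$ and $X_{B,i}$ are determined solely by the position of the last occurrence of $i$ in $r$ and hence agree; for each $i \notin r$ (in particular $i^*$), $X_{B,i} = X_{A,i} + |r|$. Thus $X_A$ is strictly better than $X_B$ in the componentwise order used by \wref{lem:worse-schedulable}. Applying that lemma, the suffix $\sigma_2$, valid as a continuation from $X_B$, is also valid from $X_A$, so $\sigma_1 r \sigma_2$ is a valid walk in $G_A$ from $X_{\mathrm{start}}$ to some state $Y$, and an analogous componentwise check yields $Y \le X_{\mathrm{start}}$. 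Iterating $\sigma_1 r \sigma_2$ from $X_{\mathrm{start}}$ produces a componentwise non-increasing sequence of states in the finite set $V_A$, which must stabilize after finitely many steps; at stabilization, $\sigma_1 r \sigma_2$ is a cycle in $G_A$ of length $|S^*| - |r| < |S^*|$, giving (via the start edges of $G_A$) a valid periodic schedule strictly shorter than $S^*$ and contradicting its minimality.

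The main obstacle I anticipate is the iteration/stabilization step: establishing the two componentwise monotonicity claims (that $Y \le X_{\mathrm{start}}$ after one iteration, and that applying the same sequence of tasks to pointwise smaller starting states produces pointwise smaller outcomes), and then using finiteness of $V_A$ to conclude that the non-increasing sequence stabilizes in a cycle of the claimed length. Each such claim reduces to a short case split on whether a given task appears in the relevant segment of the schedule, but the details need to be stated carefully to invoke \wref{lem:worse-schedulable} and its iterative consequence cleanly.
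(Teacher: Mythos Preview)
Your proof is correct and shares the essential observation with the paper---namely that the state reached after the second copy of $r$ is componentwise worse than the state after the first copy (strictly so because $r$ omits a task), so by \wref{lem:worse-schedulable} one copy of $r$ can be excised---but the packaging is genuinely different. The paper establishes the state comparison by induction on $\len(r)$ and then simply invokes \wref{lem:worse-schedulable} and \wref{lem:worse-unschedulable} to assert that the continuation after $rr$ remains valid from after~$r$; it does not spell out how to recover a \emph{periodic} schedule from this. You instead cast the argument as a minimality contradiction: starting from a shortest periodic schedule $S^*$, you show that if $rr$ occurred in it one could build $\sigma_1 r \sigma_2$, iterate it until the componentwise non-increasing sequence of endpoint states stabilizes in the finite state space, and thereby exhibit a strictly shorter cycle. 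Your route is more self-contained on the periodicity point and avoids the induction on $\len(r)$ (your direct case split on $i\in r$ versus $i\notin r$ is cleaner); the paper's route is shorter but leaves the reader to fill in the stabilization step you make explicit.
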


The simplest exploitation of \wref{pro:repetition} considers the simplest possible repeated subsequence~-- single character repetitions. Forbidding these has a small benefit in unschedulable instances, namely reducing the effective alphabet size by 1 as the last letter played cannot be repeated.

In schedulable instances the observed effect was larger, because of the order in which the trie is explored. Due to fixed order conventions, immediately repetitive additions to candidate solution prefixes are often the first to be tried. If we divide the search space around the first schedule found ($S_1$), repetitive candidate solution prefixes are over-represented before $S_1$ and thus removing them has a stronger effect on schedulable instances.

\subsubsection{Frequency Duplication}
\label{sec:duplication}
We call tasks in a Pinwheel Scheduling instance $A$ with the same values of $a_i$ \emph{duplicates}, as they are indistinguishable until either task is performed 
(after this point they can be distinguished by their $x_i$ values, which can never again be identical).
Naïvely, duplicate tasks are distinguished by their order of appearance in $A$, but an alternative method exists which exploits frequency duplication by pruning identical subtrees.

\begin{proposition}[Duplicates]
\label{pro:duplication}
    If a Pinwheel Scheduling instance contains two tasks $i \ne i'$ with $a_i=a_i'$, then it is schedulable when $i$ is performed before $i'$ iff it is schedulable when $i'$ is performed before~$i$.
\end{proposition}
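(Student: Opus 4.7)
The plan is to exploit the perfect symmetry between two tasks with identical frequency requirements via a relabelling argument. Let $A=(a_1,\ldots,a_k)$ be a Pinwheel Scheduling instance with $a_i=a_{i'}=a$ for some $i\ne i'$, and suppose that $A$ admits a valid schedule $S=s_1 s_2 \ldots$ in which task $i$ occurs before task $i'$ (that is, the first index $t$ with $s_t\in\{i,i'\}$ satisfies $s_t=i$). I would construct a second valid schedule $S'$ in which $i'$ occurs before $i$, completing one direction; the converse follows by symmetry.

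First I would define the transposition $\sigma : [k]\to[k]$ that swaps $i$ and $i'$ and fixes every other index, and apply it pointwise to $S$, setting $S' = \sigma(s_1)\sigma(s_2)\ldots{}$. The first step is to verify that $S'$ is a valid schedule for $A$. Validity requires that, for each task $j$, every length-$a_j$ window of $S'$ contains at least one occurrence of $j$. For $j\notin\{i,i'\}$, the set of positions where $j$ appears is unchanged between $S$ and $S'$, and $a_j$ is unchanged, so the constraint is inherited from $S$. For $j=i$ (respectively $j=i'$), the positions where $j$ appears in $S'$ are exactly the positions where $i'$ (respectively $i$) appears in $S$; since $a_i=a_{i'}=a$, the window constraint for $j$ in $S'$ is precisely the window constraint for the swapped task in $S$, which holds by validity of $S$.

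Next I would check the ordering condition. Let $t^* = \min\{t : s_t\in\{i,i'\}\}$; by hypothesis $s_{t^*}=i$, so $s'_{t^*} = \sigma(i) = i'$. Moreover, for every $t<t^*$ we have $s_t\notin\{i,i'\}$, hence $s'_t = s_t \notin\{i,i'\}$ as well. Therefore the first occurrence of an element of $\{i,i'\}$ in $S'$ is $i'$, which is exactly the statement that $i'$ is performed before $i$ in $S'$.

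There is no real obstacle here: the argument is a one-line symmetry observation dressed up. The only point that requires a small amount of care is to state precisely what ``performed before'' means (interpreting it as ``first occurrence in the schedule'' as above), and to be careful that the swap is applied to every position of the schedule simultaneously rather than only a prefix, so that the per-task window condition transfers cleanly through $\sigma$. Once these are pinned down, the biconditional follows by applying the construction in both directions, using that $\sigma=\sigma^{-1}$.
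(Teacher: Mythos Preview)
Your argument is correct and is essentially the same as the paper's: both define the transposition swapping $i$ and $i'$, apply it pointwise to a valid schedule, and observe that validity is preserved because $a_i=a_{i'}$ while the first occurrence of the pair flips. Your version is somewhat more explicit in verifying the window condition task by task and in pinning down what ``performed before'' means, but the underlying idea is identical.
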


As having the same frequency is a transitive relation, \wref{pro:duplication} obviously applies to instances with more than two duplicate tasks. 
The algorithm can be optimised by choosing one ordering of all duplicate tasks, instead of naively exploring all orderings. 
While this effect is limited in scope (many Pinwheel Scheduling instances have no duplicates), it has a large effect on instances which have many duplicates.

\subsubsection{Minimum Solution Length}
\label{sec:soln-length}

Each candidate solution prefix $S_c$ has a \emph{composition formula} $R$, with $k$ components~-- each component $r_i$ representing the number of instances of $i$ in $S_c$. Let $L$ be the length of the candidate solution prefix, $L$ is given by $L = \sum_{i=1}^n r_i$. For a candidate solution prefix to be sustainable, each letter $i$ must appear at least every $a_i$ letters, so over the whole solution $r_i \geq \ceil{\frac{L}{a_i}}$. To calculate the minimum value of $L$, $L_{min}$, we start by setting $r_i = 1$ for all $i$, then increment $r_i$ for each $i$ value until this condition is simultaneously met for all $i$.
 
$L_{min}$ can be used to avoid unnecessary comparisons in success testing in 2 ways:
\begin{enumerate}
    \item Only comparing states which are $L_{\mathit{min}}$ apart, because no closer states can be identical.
    \item Only performing success testing when $\len{(S)} \geq L_{\mathit{min}}$.
\end{enumerate}

The former reduces the cost of success testing each node, while the latter reduces the number of nodes which perform success testing. This effect is significant because the cost of testing for success grows as \((\len{(S)})^2\) while all other costs remain constant over \(\len{(S)}\). For a discussion of minimum solution lengths in Pinwheel Scheduling instances with two distinct numbers, including several minimum solution length algorithms, see \cite{pinwheel2numbers}.

\subsection{The Foresight Algorithm}
\label{sec:Foresight}

This optimisation modifies the Naïve failure testing process to gain more information from a similar amount of work. Instead of tracking state $X(t)$, consider \emph{urgency} $U(t)$:
\begin{equation}
    \forall i, t: u_i(t) = a_i - x_i(t) - 1
    \label{equation: urgency}
\end{equation}

This requires a different procedure when a task is performed (to perform task $i$, set $u_i = a_i-1$) and a different growing procedure (to grow $V$, set $u_i(t+1) = u_i(t)-1$ for all $i$). 
The Naïve failure testing procedure would test that $\forall i:u_i\geq 0$ but an alternative failure testing procedure is now possible if the urgency values of tasks are stored in ascending order ($\forall i: u_i \leq u_{i+1}$):

\begin{proposition}[Urgency]
\label{pro:urgency}
    If an urgency state $U$ is schedulable, $\forall i: u_i\geq i$.
\end{proposition}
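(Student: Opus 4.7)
The plan is a short pigeonhole argument once we unpack the operational meaning of urgency. First, I would translate the definition into a deadline: combining $u_i = a_i - x_i - 1$ with the validity constraint $x_i < a_i$, any valid continuation of the current state must perform task $i$ on one of the next $u_i + 1$ days (counting today), since otherwise $x_i$ would reach $a_i$ before $i$ is next scheduled, violating validity.

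With the urgencies sorted in ascending order, I would look at the tasks of the $i+1$ smallest urgencies $u_0 \le u_1 \le \cdots \le u_i$. Each of these has a deadline at most $u_i$ days away, so all $i+1$ of them must be scheduled within the next $u_i + 1$ days. Since at most one task can be performed per day, pigeonhole forces $u_i + 1 \ge i + 1$, \ie, $u_i \ge i$. Contrapositively, a sorted state violating $u_i \ge i$ at some position certifies that the deadlines of the most urgent tasks cannot collectively be met, which is the algorithmic pruning rule intended by the proposition.

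I do not anticipate any real obstacle. The argument is essentially one line once the ``deadline within $u_i + 1$ days'' reading of $u_i$ is made explicit; the only care needed is keeping the off-by-one consistent around whether the current day counts inside the window and which index range we sort over. A secondary remark worth including is that the argument applies at every time step of a supposed schedule, so the condition must hold for all reachable states, which is why the pruning rule is safe to apply throughout the backtracking search.
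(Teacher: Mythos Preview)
Your argument is correct and uses the same core idea as the paper: the $i+1$ most urgent tasks all have deadlines within the next $u_i+1$ slots, so pigeonhole forces $u_i\ge i$. The paper wraps this same pigeonhole observation in an induction on $i$ (deriving a contradiction from the case $u_{i+1}=u_i=i$), whereas you apply it directly for each $i$; the substance is identical, and your presentation is slightly more streamlined.
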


This can be used to detect failure up to to $k$ days in advance for little additional cost over the Naïve failure testing method, reducing tree height. It can also be used to \emph{force} the execution of certain tasks on certain days:

\begin{proposition}[Forcing]
\label{pro:forcing}
    If a schedulable urgency state $U$ exists such that $\exists i': u_{i'}=i'$, then the task at position $i'$ and all preceding tasks must be executed in the next $i'$ days.
\end{proposition}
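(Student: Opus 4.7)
The plan is to derive Forcing as a direct consequence of the Urgency proposition combined with a short counting argument. Because the urgency values $u_1,\ldots,u_k$ are stored in ascending order, the hypothesis $u_{i'} = i'$ immediately yields $u_j \le u_{i'} = i'$ for every position $j \le i'$, so each of these tasks has urgency at most $i'$.

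First I would unfold the definition $u_j = a_j - x_j - 1$ to re-confirm a ``deadline'' interpretation of urgency: each decrement of $u_j$ corresponds to a day on which task $j$ is not executed, and $u_j$ falling below $0$ violates the Pinwheel constraint $x_j < a_j$. Consequently, any task currently carrying urgency at most $i'$ must be executed at least once within the next $i'$ days, otherwise the state is no longer schedulable. This is the same deadline monotonicity that drives the proof of \wref{pro:urgency}; it is worth spelling out here, since it is the only piece of input beyond the sorted-order hypothesis.

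Second, I would apply a pigeonhole step: the positions $1, 2, \ldots, i'$ correspond to $i'$ distinct tasks, each of which, by the previous paragraph, must appear at least once in the next $i'$ days. Since at most one task can be executed per day, the $i'$ upcoming slots are \emph{exactly} consumed by these $i'$ tasks (in some order), which is the conclusion of Forcing. Note that we do not need to determine the specific order; Forcing only asserts the \emph{set} of tasks filling these slots.

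The main subtlety is bookkeeping the indexing convention used by \wref{pro:urgency}: one must carefully verify that the lower bound $u_i \ge i$ there, the equality $u_{i'} = i'$ here, and the phrase ``the next $i'$ days'' all agree once $u_j$ is read as ``$j$ can be deferred for at most this many additional days.'' Once that off-by-one is pinned down, the remainder of the proof is a clean application of the two observations above and requires no deeper structural insight.
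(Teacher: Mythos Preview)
Your proposal is correct and follows essentially the same argument as the paper: use the sorted order to bound $u_j \le i'$ for every $j \le i'$, then observe (via the urgency-decrement rule) that any such task left unexecuted for more than $i'$ days would reach negative urgency, contradicting schedulability. The paper phrases this as a short proof by contradiction rather than your forward ``deadline'' reading, but the substance is identical; your additional pigeonhole step is sound but not needed for the proposition as stated, since the conclusion only asserts that these tasks appear within the next $i'$ days, not that they exhaust those slots.
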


This can be used to greatly restrict branching and hence tree breadth; if $\exists i'$ such that $u_{i'} = i'$ then the next move must have $i \leq i'$. This optimisation is compatible with all three changes described in \wref{sec:optimisations}, and all three are included in the final Foresight implementation.

\subsection{Deciding Tight Feasibility}
As introduced in  \wref{sec:prelims}, the tightness of Pinwheel Scheduling instances can be determined by testing for the existence of a schedule containing at least one gap. This was implemented using the Optimised algorithm in \wref{sec:optimisations}, by making the default action from every position a holiday and adding an extra testing step after a sustainable state was found~-- searching the schedule that produced this state for a holiday. If no gap is found in that schedule, the search continued until a loose schedule was found or it was demonstrated that no loose schedule can exist. In principle, it would be possible to implement the Foresight algorithm from \wref{sec:Foresight} with gaps, but this would have required a full re-implementation of Foresight~-- a substantial time investment.

\section{Engineering the 5/6 Surfaces}
\label{sec:5-6-surface}

Our principal application of the algorithms from \wref{sec:algorithms} is the investigation of the $\frac{5}{6}$ conjecture for low $k$ values.
This section describes our algorithm for computing a Pareto surface for $\mathcal C(\mathcal P_{k,5/6})$, code for which is available online \cite{codeAlenex22}.

\subsection{Core algorithm}
We search the trie of Pinwheel Scheduling problems introduced in \wref{sec:pareto-trie} using the depth first search procedure outlined in that section. The search from a node at depth $h$ begins by creating a child with the smallest possible added frequency, then proceeds until the subtrie of the new node is fully explored. If a node has density $d=\frac{5}{6}$, it can have no descendants and is fully explored~-- otherwise the depth first search proceeds by fully exploring all children until each has a descendent with a symmetry of $k + 1 - h$. 
This descendent necessarily dominates all siblings seen after it in a depth first search. As only nodes with a density $d\leq\frac{5}{6}$ need be considered, denser nodes are ignored by this process.

We will outline several optimisations which introduce denser problems that may be used to dominate problems found by this search. 
To show that the $\frac{5}{6}$ conjecture is true for a certain value of $k$, we need to show that no unschedulable Pinwheel Scheduling systems with a density $\leq\frac{5}{6}$ exist for that value of $k$. That is, we need to show that the set of all unschedulable Pinwheel Scheduling systems with $d\leq \frac{5}{6}$ found when constructing $\mathcal C(\mathcal P_{k,5/6})$ is the empty set.

\subsection{Constructing the Pareto Surface}
\label{sec:5/6Optimisations}

We could consider \emph{the} density restricted Pareto surface comprised exclusively of members of $\mathcal P_{k,5/6}$, but this would prevent many useful optimisations. Instead, we require \emph{any} density restricted Pareto surface, consisting of a set of solutions which solve all Pinwheel Scheduling systems with a density $\leq\frac{5}{6}$~-- that is, we allow our surface to contain denser instances, so long as it remains complete and no unschedulable Pinwheel Scheduling systems with density $\leq\frac{5}{6}$ are found. This allows for optimisations which use easily schedulable Pinwheel Scheduling systems to dominate large classes of non-trivially schedulable Pinwheel Scheduling systems with density below~$\frac{5}{6}$.

%\needspace{5\baselineskip}

\subsubsection{Frequency Capping}
This optimisation builds on \wref{conj:kernel-conjecture}, which we eagerly assume to be true here but then immediately check the validity of for each instance generated.
We cap the maximum $a_i$ value of considered Pinwheel Scheduling instances at $2^{k-1}$. This only lowers frequencies, so the capped Pinwheel Scheduling instance dominates both the instance it was created from and often many similar instances~-- particularly when multiple frequencies are capped. 

Because capping reduces frequencies, it raises densities. To avoid a potential issue where the density of a problem is below $\frac{5}{6}$ before capping but above $\frac{5}{6}$ after capping, we replace $\mathcal P_{k,5/6}$ with the similar and dominant $\mathcal P^*_{k,5/6}$. This set includes all problems where either $d\leq \frac{5}{6}$ and $\forall i: a_i < 2^{k-1}$ or which consist of a prefix with density $d\leq \frac{5}{6}$ and a suffix where $\forall i: a_i = 2^{k-1}$.

Any unschedulable members of $\mathcal P^*_{k,5/6}$ would be counterexamples to either \wref{conj:5-6-conjecture} or \wref{conj:kernel-conjecture}; which one would require future investigation.
Both conjectures have proved true in all presently considered instances.

\subsubsection{Folding}
This optimisation uses a pair of simple operations on Pinwheel Scheduling instances: folding and unfolding.
The \emph{$c$-task folding} of a Pinwheel Scheduling instance $A = (a_1,\ldots,a_k)$
is the instance 
$B = (b_1,\ldots,b_{k-c+1})$ with $k-c$ tasks with respective frequencies $a_1,\ldots,a_{k-c}$ and one task of frequency $\floor{a_{k-c+1}/c}$,
\ie, $B$ is obtained from $A$ by replacing the last $c$ tasks by a single one with frequency
$\floor{a_{k-c+1}/c}$.
The \emph{$c$-wise unfolding} of a Pinwheel Scheduling instance $A = (a_1,\ldots,a_k)$ \emph{at task $i$} is the instance $B=(b_1,\ldots,b_{k+c-1})$, where $B$ has $k-1$ tasks of frequencies $a_1,\ldots,a_{i-1},a_{i+1},\ldots,a_k$ plus $c$ tasks each with frequency $c a_i$.

Note that unfolding does not alter density, whereas folding never \emph{lowers} density
(but can substantially increase it).
Moreover, any schedule for $A$ can be turned into a schedule for a $c$-wise unfolding of $A$ by repeating the schedule $c$ times, replacing $i$ each time by a different copy in $B$.
Likewise, any schedule for a $c$-task folding of $A$ can be used to generate a schedule for $A$ itself by the same process.

We use this as follows.
Whenever a Pinwheel Scheduling instance needs to be solved, we try to find schedules for all viable foldings of that instance in parallel. 
If any $c$-task folding is schedulable, we find the strictest Pinwheel Scheduling instance solved by its schedule and unfold the folded task back into $c$ tasks. 
If $c>1$, this unfolded instance will dominate the original instance $A$~-- it will usually have higher symmetry than $A$. It will often be faster to solve than $A$, because it has both fewer tasks and smaller task separations. 

No challenge to the $\frac{5}{6}$ conjecture has been found unless the original instance $A$ is unschedulable, so all foldings can be considered in parallel and any unschedulable instances with $c>1$ discarded. 
As some Pinwheel Scheduling instances can be dramatically more challenging to solve than others (for our tools), a very substantial speedup was achieved by running all foldings in parallel and terminating all threads as soon as the first schedulable folding was found.

\begin{table*}[bht]
        \centering
        \begin{tabular}{r|ccc | cc | c}
        \toprule
            $k$  & Foresight            & Optimised            & Naïve                  & Opt/FS              & Naïve/Opt         & Surface Size \\
            \midrule
            6  & 2.36 $\pm$ 0.02        & 3.30 $\pm$ 0.07      & 4.97 $\pm$ 0.06        & 1.40 $\pm$ 0.03     & 1.51 $\pm$ 0.04   &   23    \\
            7  & 6.65 $\pm$ 0.04        & 18.80 $\pm$ 0.07     & 39.5 $\pm$ 0.3         & 2.83 $\pm$ 0.02     & 2.10 $\pm$ 0.02   &   78    \\
            8  & 16.3 $\pm$ 0.2         & 487 $\pm$ 3          & 895 $\pm$ 6            & 29.8 $\pm$ 0.3      & 1.84 $\pm$ 0.02   &   214   \\
            9  & 105.0 $\pm$ 0.8        & 367 $\pm$ 3          & 645 $\pm$ 3            & 3.50 $\pm$ 0.04     & 1.76 $\pm$ 0.02   &   638   \\
            10 & 869 $\pm$ 5            & 944 $\pm$ 2          & 3130 $\pm$ 30          & 1.086 $\pm$ 0.006   & 3.31 $\pm$ 0.04   &   5347  \\
            11 & 4300 $\pm$ 20          & 4670 $\pm$ 10        & 9860 $\pm$ 60          & 1.015 $\pm$ 0.006   & 2.26 $\pm$ 0.02   &   15265 \\

        \bottomrule
        \end{tabular}
    \caption{Total time in seconds and relative speedup to generate the $5/6$ Pareto surface using three Pinwheel Scheduling oracles and all optimisations from \wref{sec:5/6Optimisations}; the last column shows the size of the Pareto surface found by Foresight, taken from a representative run because errors are too small to adequately estimate. Data for $k=12$ is not shown, as different hardware was used to generate that surface and this data.}
    \label{tab:5/6PWS performance}
\end{table*}

\subsubsection{Initializing the Pareto Surface}
\ifsiam{~\\}{}
While the core algorithm constructs $\mathcal C(\mathcal P_{k,5/6})$ from scratch,
we can speed this up substantially by starting with some scheduled Pinwheel Scheduling problems, which may then be used to dominate problems in need of a solution. 

%\ifarxiv{\newpage}{}
Unfolding a schedulable smaller instance to $k$ tasks is an easy way to generate
many scheduled Pinwheel Scheduling problems.

We pump prime our computations by using an \emph{unfolding surface}:
Consider the example of the one-task instance $(1)$. We recursively unfold this problem in each way that obtains $k$ tasks. All resulting instances are both dense and schedulable.
If we want instances of $k=3$ tasks, we first unfold $(1)$ to $(3,3,3)$ directly, then to $(2,2)$ and unfold that to $(2,4,4)$. 

The first version of this optimisation ($P(1)$) uses this unfolding of the single task instance $(1)$ as sketched above; the second ($P(5)$) unfolds $\mathcal C_5$, the Pareto surface for $k = 5$ and the third ($P(k-1)$) adds all elements of the \emph{previous} Pareto trie. These developmental stages are evaluated in \wref{sec:evaluating folding}.

\subsection{Searching the Pareto Surface}
Due to the success of the above approximations, the vast majority of Pinwheel Scheduling problems considered at any $k$ value have known solutions (99.8\% of problems in the Pareto surface for $k=11$ were made by unfolding the $k=10$ surface). If they are already known, solutions to each problem must be chosen from the known portion of the Pareto surface (which is large at high $k$ values, per \wref{tab:5/6PWS performance}), so efficiently searching this surface is crucial. 
This is effectively a dominance query over a dynamic set of points in $\R^k$.

While we could employ a standard data structure for orthogonal range searching here, the specific structure of our point set (the Pinwheel instances) suggests a bespoke trie-based solution:
We maintain all members of the known portion of the Pareto surface in a set of tries, separated according to their symmetries (each a subset of the Pareto trie, and using the same ordering conventions). These tries are searched in descending order of symmetry with depth first dominance queries, to find the highest symmetry solution to each solved problem. This data structure minimises repeated comparisons of the same value, but also exploits the structure of solvable Pinwheel Scheduling problems. 

When searching for a problem with a low value of $a_0$, high $a_0$ problems are excluded immediately. When searching for a problem with a higher $a_0$ value, low $a_0$ problems are initially considered, but eliminated quickly because they must escalate rapidly due to density constraints. The latter parts of problems are far less predictable, while also having a much larger range of possible values, and thus the dimensions which must be searched are highly asymmetric.

%\needspace{3\baselineskip}

\section{Performance Evaluation}
\label{sec:evaluation}

In this section, we report on an extensive running-time study for various aspects of our tools.

\subsection{Pinwheel Schedulers}
\label{sec:pinwheelSchedulers}
We begin by evaluating the relative performance of an implementation of the state-graph based algorithm (Graph), as well as our backtracking algorithms (Naïve, Optimised and Foresight) using synthetic data. We then further evaluate the Naïve, Optimised and Foresight algorithms on the computation of Pareto surfaces.
\subsubsection{Randomly Generated Data}
\label{sec:randomlyGeneratedData}
The four schedulers introduced in \wref[Sections]{sec:prelims} and~\ref{sec:algorithms} were evaluated using Pinwheel Scheduling instances generated using the following random process: 
Let a real number $b$ be a density budget, initially 1. Generate a random real number $0 < r\leq b$, and from it a candidate task $a_c = \floor{\frac{1}{r}}$. While $b - \frac{1}{a_c} > 0$, continue adding new tasks to an initially empty Pinwheel Scheduling problem $A$, updating $b$ each time ($b \rightarrow \frac{1}{a_c}$). Once a task is rejected, if $b \neq 0$ add a final task $a_k = \ceil{\frac{1}{b}}$. Finally, sort $A$ and replace any tasks where $a_i > 2^{k-1}$ with $a_i = 2^{k-1}$. 

\needspace{3\baselineskip}

Problems generated this way have high densities and a mixture of low and high $a_i$ values, which makes them challenging to schedule. Capping by $2^{k-1}$ is done in light of \wref{conj:2^k-conjecture} to makes instances more representative of the problems our algorithms were designed for.

A running-time study using these problems is shown in \wref{fig:PWSTournament}, which demonstrates that each algorithm improves on its predecessor.
Here, we repeatedly draw instances from above distribution, but if an instance had already been drawn earlier, it is rejected and a new instance is drawn. 
Since instances with few tasks are more likely to arise in the above random process, non-rejected instances tend to get increasingly challenging over time. 
The correlation between instance difficulty for different algorithms is noteworthy. This suggests the existence of an intrinsic ``difficulty'' for Pinwheel Scheduling instances, at least \wrt our studied algorithms.
We leave a further exploration of this observation for future work.

\subsubsection{The 5/6 Surface}
\label{subsubsec: comparing using the 5/6 surface}

A secondary evaluation used the time it took to find the Pareto surface for $d \leq \frac{5}{6}$ using each method (see \wref{tab:5/6PWS performance}). This evaluation showed more variability between the performance of the Optimised and Foresight algorithm than seen in the previous section~-- with the Optimised algorithm taking $29.8 \pm 0.3$ times as long at $k=8$ but $1.015 \pm 0.006$ times as long at $k=11$ as the Foresight algorithm. This is likely due to the dominance of \emph{search time} at high $k$ values~-- at $k=11$, Foresight and Optimised respectively spent $96.2\pm 0.6\%$ resp.\ $87\pm 4\%$ of their time matching problems to previously found schedules. 

As such, the size of the $P(k-1)$ approximation of the Pareto surface was the determining factor in these times~-- a complex effect of the properties of the specific solutions found by each method. Future algorithms will aim to produce solutions more capable of dominating many problems and less costly search procedures for the Pareto surface.

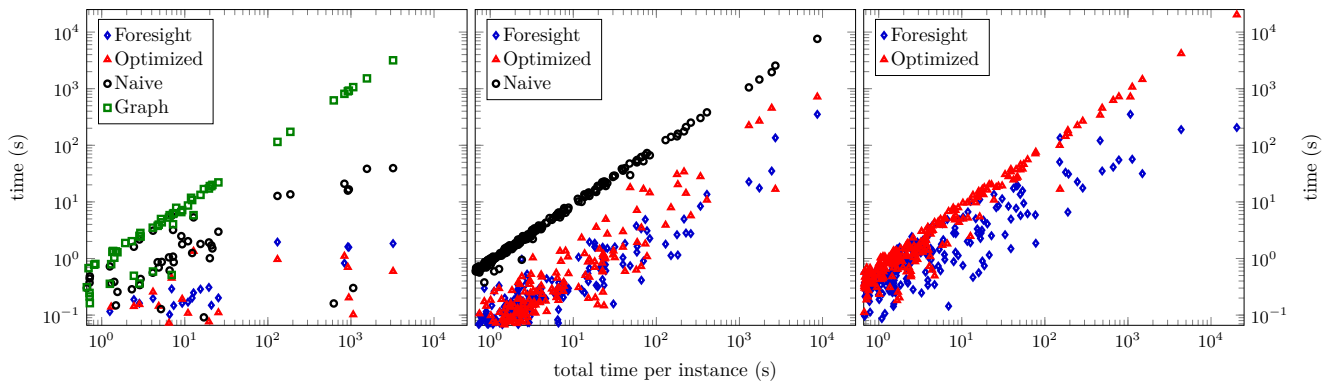
\begin{figure*}[thb]
	\externalizedpicture{plot-tournament}
	\hspace*{-.5em}{\resizebox{\linewidth+1em}!{\begin{tikzpicture}
	\pgfplotstableread[col sep=comma]{plots/fig-2.1.csv}\tablefigleft
	\pgfplotstableread[col sep=comma]{plots/fig-2.2.csv}\tablefigmiddle
	\pgfplotstableread[col sep=comma]{plots/fig-2.3.csv}\tablefigright
	\def\myxmin{.66}
	\def\myymin{0.066}
	\def\myxmax{25000}
	\def\myymax{25000}
	
	\begin{scope}[shift={(0,0)}]
	\begin{axis}[
			xmode=log,ymode=log,
			ymin=\myymin,ymax=\myymax,
			xmin=\myxmin,xmax=\myxmax,
			ylabel={time (s)},
			legend pos=north west, legend cell align=left,
			mark options={very thick,scale=.8},
		]
		\addplot[only marks,blue!80!black,mark=diamond] table[x=total,y=foresight] \tablefigleft ;
		\addlegendentry{Foresight}
		\addplot[only marks,red,mark=triangle] table[x=total,y=opt] \tablefigleft ;
		\addlegendentry{Optimized}
		\addplot[only marks,black,mark=o] table[x=total,y=naive] \tablefigleft ;
		\addlegendentry{Naive}
		\addplot[only marks,green!50!black,mark=square] table[x=total,y=graph] \tablefigleft ;
		\addlegendentry{Graph}

	\end{axis}
	\end{scope}
	
	\begin{scope}[shift={(7,0)}]
	\begin{axis}[
			xmode=log,ymode=log,
			ymin=\myymin,ymax=\myymax,
			xmin=\myxmin,xmax=\myxmax,
			xlabel={total time per instance (s)},
			legend pos=north west,legend cell align=left, 
			mark options={very thick,scale=.8},
			yticklabels=\empty,
		]
		\addplot[only marks,blue!80!black,mark=diamond] table[x=total,y=Foresight] \tablefigmiddle ;
		\addlegendentry{Foresight}
		\addplot[only marks,red,mark=triangle] table[x=total,y=Opt] \tablefigmiddle ;
		\addlegendentry{Optimized}
		\addplot[only marks,black,mark=o] table[x=total,y=Naive] \tablefigmiddle ;
		\addlegendentry{Naive}

	\end{axis}
	\end{scope}
	
	\begin{scope}[shift={(14,0)}]
	\begin{axis}[
			xmode=log,ymode=log,
			ymin=\myymin,ymax=\myymax,
			xmin=\myxmin,xmax=\myxmax,
			ylabel={time (s)},
			legend pos=north west, legend cell align=left,
			mark options={very thick,scale=.8},
			ylabel near ticks, yticklabel pos=right,
		]
		\addplot[only marks,blue!80!black,mark=diamond] table[x=total,y=Foresight] \tablefigright ;
		\addlegendentry{Foresight}
		\addplot[only marks,red,mark=triangle] table[x=total,y=Opt] \tablefigright ;
		\addlegendentry{Optimized}
	\end{axis}
	\end{scope}
	
	\end{tikzpicture}}}
	\caption{%
		Results of a tournament between the four Pinwheel Scheduling solvers introduced in \wref{sec:prelims} and \wref{sec:algorithms}, using the randomly generated Pinwheel Scheduling problems introduced in \wref{sec:randomlyGeneratedData}. After each round, the slowest method was eliminated: first the Graph method, then the Naïve method, and finally the Optimised method. The $x$-axes show the total time all methods took to solve each instance, a measure of overall complexity due to the apparent correlation between solve times for different methods. The $y$-axis shows the individual running time of the compared algorithms.%
	}
\label{fig:PWSTournament}
\end{figure*}

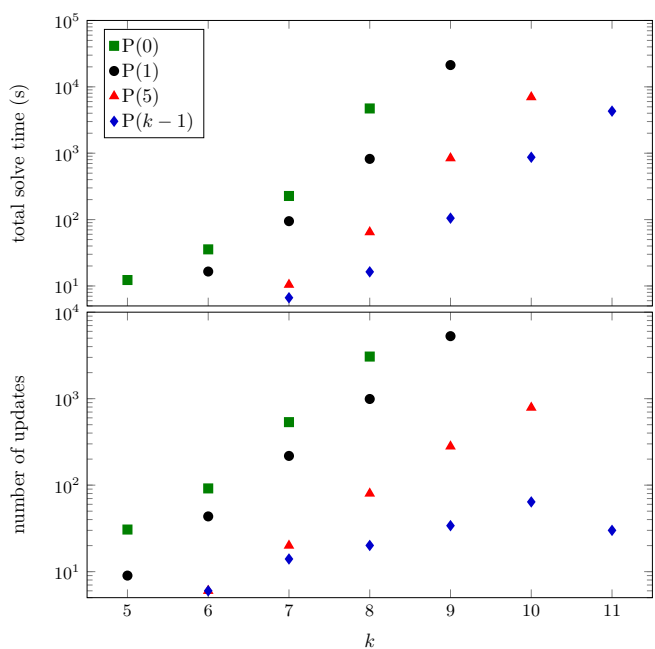
\begin{figure}[bh]

	\externalizedpicture{plot-approximations}
    \hspace*{-.5em}\resizebox{\linewidth+.5em}!{\begin{tikzpicture}
    \pgfplotstableread[col sep=comma]{plots/fig-3.1-time-taken.csv}\tablefigtop
    \pgfplotstableread[col sep=comma]{plots/fig-3.2-n-updates.csv}\tablefigbottom
   	\begin{scope}[shift={(0,0)}]
	\begin{axis}[
			xmode=normal,ymode=log,
			ymin=5,ymax=100000,
			xmin=4.5,xmax=11.5,
			ylabel={total solve time (s)},
			legend pos=north west, legend cell align=left,
			mark options={very thick,scale=1},
			width=1.4\linewidth,
			height=.8\linewidth,
			xticklabels=\empty,
		]
		\addplot[only marks,green!50!black,mark=square*] table[x=k,y=no approx] \tablefigtop ;
		\addlegendentry{P(0)}
		\addplot[only marks,black,mark=*] table[x=k,y=P(1)] \tablefigtop ;
		\addlegendentry{P(1)}
		\addplot[only marks,red,mark=triangle*] table[x=k,y=P(5)] \tablefigtop ;
		\addlegendentry{P(5)}
		\addplot[only marks,blue!80!black,mark=diamond*] table[x=k,y=P(k-1)] \tablefigtop ;
		\addlegendentry{P($k-1$)}
	\end{axis}
	\end{scope}
	
	\begin{scope}[shift={(0,-5.3)}]
	\begin{axis}[
			xmode=normal,ymode=log,
			ymin=5,ymax=10000,
			xmin=4.5,xmax=11.5,
			ylabel={number of updates},
			mark options={very thick,scale=1},
			width=1.4\linewidth,
			height=0.8\linewidth,
			xlabel={$k$},
		]
		\addplot[only marks,green!50!black,mark=square*] table[x=k,y=no approx] \tablefigbottom ;
		\addplot[only marks,black,mark=*] table[x=k,y=P(1)] \tablefigbottom ;
		\addplot[only marks,red,mark=triangle*] table[x=k,y=P(5)] \tablefigbottom ;
		\addplot[only marks,blue!80!black,mark=diamond*] table[x=k,y=P(k-1)] \tablefigbottom ;
	\end{axis}
	\end{scope}
    \end{tikzpicture}}

    \caption{%
    	Running time (top) and total updates to each approximation required to generate a complete density-restricted Pareto surface (bottom) with no approximation, and with all three approximations introduced in \wref{sec:5/6Optimisations}. The top figure demonstrates that while all costs are exponential with respect to $k$, P($k-1$) improves on P(5), which improves on P(1), which improves on the algorithm using no approximation. The bottom figure shows why~-- the approximation of the Pareto surface is significantly better for the later methods. 
	} 
    \label{fig:analysing approximations}
\end{figure}
\subsection{Constructing the 5/6 Pareto Surface}
\label{sec:evaluating 5/6 surface optimisations}
This section evaluates the methods used to generate $\mathcal C(\mathcal P_{k,5/6})$ which introduced in \wref{sec:5/6Optimisations}.

\subsubsection{Frequency capping}
\label{subsubsec:kernel optimisation}
The Kernel optimisation improved performance in two key ways: Firstly, it increased the symmetry of problems, thus reducing the number of problems that needed to be considered. 
Secondly, it reduced the largest $a_i$ values, which was particularly helpful for the deadline driven Foresight algorithm, which often ignores tasks with large $a_i$ values for long periods of time. Reducing the maximum $a_i$ value combated this, but our ongoing work will produce a version of Foresight more capable of handling arbitrarily large $a_i$ values. 

\subsubsection{Folding}
\label{sec:evaluating folding}
While folding had several benefits, the principal one was in exploiting the large variance between the cost of solving different Pinwheel Scheduling problems. Problems with smaller $k$ values, smaller maximum $a_i$ values and schedulable $A$ values are substantially faster then the converse~-- solving only the fastest of a set of problems that differ in these respects thus saves very considerable amounts of time. While starting all problems in the folded set increases overall work, these problems are solved in parallel so this does not translate to a substantial additional time cost.

In addition to often being faster to solve, problems which have been folded, solved and then unfolded usually have higher symmetries and lower $a_i$ values than the problems used to generate them and are therefore better at dominating other instances.

\subsubsection{Initializing the Pareto Surface}\mbox{}\\
Approximating the Pareto surface had significant effects on solve times. In addition to being very cheap to generate, schedules produced by approximation tend to solve problems with very high densities (as all unfoldings of a problem share the density of that problem) and high symmetries~-- they are thus ideal for dominating problems and reducing the size of the search space.

As shown in \wref{fig:analysing approximations}, each of the three approximations we considered substantially improves on its predecessor. With the introduction of the P($k-1$) approximation, the principal cost of finding the density restricted Pareto surface became the process of searching a list of schedules largely produced by approximation~-- we expect the effect of the P($k-1$) optimisation to increase when the search process is optimised in our future work.

\subsection{Searching the Pareto Surface}
\label{sec:evaluating searching}
Both trie-based searching and naive searching were implemented, with Trie-based searching running $8.80\pm0.05$ times faster at $k=11$, a substantial speed increase (though the performance gain was less substantial at smaller $k$ values, probably due to their smaller Pareto surfaces and the overheads inherent in a more complex data structure).

\section{Conclusion}
\label{sec:conclusion}

We presented new evidence for the $5/6$-density conjecture in Pinwheel Scheduling (\wref{conj:5-6-conjecture})
by engineering algorithms to compute a finite set of schedules that solves
any of the infinitely many solvable instances with at most $12$ tasks and $d\leq \frac{5}{6}$.
This substantially strengthens the confidence in the conjecture and has led
to new tools (theoretical and software) of independent interest 
for studying Pinwheel Scheduling.

Moreover,
we have constructed the full Pareto surfaces of Pinwheel Scheduling problems for $k \leq 5$, shown in \wref{tab:pareto-surfaces}, \ie, any Pinwheel Scheduling instance with at most $5$ tasks is schedulable if and only if one of the schedules listed in \wref{tab:pareto-surfaces} is valid for it. 

There are several avenues for future work.
Apart from settling the longstanding $5/6$-density conjecture,
confirming (or refuting) our new $2^k$ and kernel conjectures (\wref{conj:2^k-conjecture} and \wref{conj:kernel-conjecture})
about the largest ``effective'' frequencies would have interesting structural consequences for Pinwheel Scheduling.
Settling the complexity status of Pinwheel Scheduling for non-dense instances is another intriguing direction.

On the practical side, the Bamboo Garden Trimming problem introduced in~\cite{BGTIntro}
has recently received attention in an extensive experimental work~\cite{DEmidioDiStefanoNavarra2019} in the context of approximation algorithms.
Our Pareto surfaces for Pinwheel Scheduling immediately imply
similar equivalence classes for Bamboo Garden Trimming;
our corresponding results have been omitted due to space constraints.
The consequences of these results for approximate algorithms in Bamboo Garden Trimming deserve further exploration and are the subject of ongoing work.

\myacknowledgements

% !BIB program = bibtex
\bibliography{references}

\ifproceedings{}{
    \clearpage
    \appendix
    \part*{Appendix}
        
    \section{Omitted Proofs from \wref{sec:algorithms}}
\label{app:proofs}

In this appendix we collect the correctness proofs for our optimizations of the 
Pinwheel backtracking algorithms.

\begin{proofof}{\wref{lem:worse-schedulable}}
    Let \(X\) and \(X'\) be two arbitrary states in a Pinwheel Scheduling system \(A\) such that \(X\) is worse than \(X'\) and $S$ be a valid schedule for $A$, starting in state $X$.
    
     Proceed by induction over time $t$ as this schedule is executed simultaneously on two copies of $A$~-- one starting from $X$ and the other from $X'$:
    \begin{itemize}
        \item Base case: $\forall i:x'_{i}(t)\leq x_i(t)$.
        
        \item Inductive hypothesis: After $t$ days of the execution of $S$, $\forall i:x'_i(t)\leq x_i(t)$.
        
        \item Inductive step: On day $t$, all tasks grow: $\forall i:x_i(t+1) = x_i(t)+1$ and $\forall i:x'_i(t+1) = x'_i(t)+1$. The task $S_t$ is then executed: $x_{S_t}(t+1) = x'_{S_t}(t+1) = 0$. Neither step can make any $x_i(t+1)$ lower than any $x'_i(t+1)$.
    \end{itemize}
    
    Therefore $\forall i,t: x'_i(t)\leq x_i(t)$. As $S$ solves $A$ from $X$, it follows that $\forall i,t: x_i(t)\leq a_i$. Therefore, $\forall i,t: x'_i(t)\leq a_i$, i.e. any solution to $X$ is a solution to $X'$. 
\end{proofof}

\begin{proofof}{\wref{lem:worse-unschedulable}}
    Let $X$ and $X'$ be two arbitrary states in a Pinwheel scheduling system $A$ such that $X$ is worse than $X'$ and $X'$ is unschedulable. Consider an arbitrary schedule $S$ that starts at $X$ and assume towards a contradiction that $S$ is infinite~-- that is that $S$ is a valid schedule.

    As $S$ is infinite, executing $S$ from $X$ will never violate $\forall i: x_i\leq a_i$. As $X$ is better than $X'$, $\forall i: x_i\leq x'_i$, so $\forall i: x'_i\leq a_i$ and $S$ is also a solution to $X'$. However, $X'$ is unschedulable -- a contradiction. Therefore $S$ is must be finite. Therefore $X$ must be unschedulable.
\end{proofof}

\begin{proofof}{\wref{pro:repetition}}
    Let $S_r$ be a solution to a Pinwheel Scheduling problem $A$, and let $r$ be a repeated phrase within $S_r$, \ie, $S_r$ takes the form $S_r = \cdots r, r \cdots$. Let the state immediately following the first instance of $r$ be $X'$ and the state immediately following the second instance be $X$. This proof proceeds by induction over $\len(r)$:
    \begin{itemize}
        \item Base case: if $\len(r) = 1$ then $x'_{r_1}=x_{r_1}=0$ as the task $r_1$ was performed immediately before considering both states. If $k(A) = 1$ then $X = X'$, otherwise $\forall i \neq r_1: x_i = x'_i + 1$ as all executed tasks have grown. Therefore either $X = X'$ or $X$ is worse than $X'$.
        
        \item Inductive hypothesis: for $\len(r) = L$, either $X = X'$ or $X$ is worse than $X'$.
        
        \item Inductive step: If $X(t) = X'(t)$ then $\forall i:x_i(t) = x'_i(t)$ and extending $r$ will set $x'_{r_L}(t+1)=x_{r_L}(t+1)=0$ and $\forall i \neq r_L: x_i(t+1) = x'_i(t+1) = x_i(t)+1$, so $X(t+1) = X'(t+1)$. 
        
        If $X(t)$ is worse than $X'(t)$ then $\forall i: x_i(t) \geq x'_i(t)$. Extending $r$ will perform the task $r_L$ so that $x_{r_L}(t+1) = x'_{r_L}(t+1) = 0$ and let all other task separations grow ($\forall i \neq r_L: x_i(t+1) = x_i(t)+1$ and $\forall i \neq r_L: x'_i(t+1) = x'_i(t)+1$).
        
        Therefore either $X(t+1) = X'(t+1)$ or $X(t+1)$ is worse than $X'(t+1)$. 
    \end{itemize}
    This means that repeating a phrase $r$, of any length, either returns the system to the same state or takes it to a worse state. 
    
    If $r$ returns the system to the same state then $r$ is a solution to $V$. Else, per  \wref{lem:worse-schedulable} and \wref{lem:worse-unschedulable}, either $X'$ is unschedulable or there exists a solution from $X'$ which is also a solution from $X$.
\end{proofof}

\begin{proofof}{\wref{pro:duplication}}
    Consider a Pinwheel Scheduling system $A_s$ containing two symmetric tasks ($i$ and $i'$ such that $a_i = a_{i'}$). Let $S_c$ be an arbitrary sequence in $A_s$ where $a_i$ is performed before $a_{i'}$. Then let $S_c'$ be an arbitrary sequence identical to $S_c$, save that every occurrence of $i$ is replaced with an occurrence of $i'$ and vice versa. As $a_i = a_{i'}$, $S_c'$ exists iff $S_c$ exists. Therefore, if an infinite $S_c$ exists, an infinite $S_c'$ exists and vice versa. Also, if no infinite $S_c$ exists then no infinite $S'$ exists and vice versa. Therefore $A_s$ is schedulable when $i$ is performed before $i'$ iff $A_s$ is schedulable when $i'$ is performed before $i$.
\end{proofof}

\begin{proofof}{\wref{pro:urgency}}
    Consider a Pinwheel Scheduling instance $A$ at urgency state $U$ such that $A$ is schedulable from $U$. Proceed by induction over $i$.
    \begin{itemize}
        \item Base case: as $U$ is schedulable, $\forall i:u_i\geq 0$ so $u_0\geq 0$.
        \item Inductive hypothesis: $u_i \geq i$.
        \item Inductive step: $U$ is ordered, so $u_{i+1}\geq u_i$, hence $u_{i+1}\geq i$. If $u_{i+1} > u_i$ then $u_{i+1}\geq i+1$. Alternatively, if $u_i=u_{i+1}$ then either $u_i > i$ and $u_{i+1} = u_i > i$ or $u_{i+1} = u_i = i$. 
        
        Assume that $u_{i+1} = u_i = i$ towards a contradiction.
        
        As $U$ is schedulable, some infinite $S$ exists such that $\forall i,t:u_i(t)\geq 0$. 
        
        This schedule must execute $u_i$ before $i + 1$ days have passed or it will have urgency $u_i \leq i-(i+1)$ and hence $u_i<0$ before being executed. 
        
        Likewise for $u_{i+1}$ as $u_i = u_{i+1}$. As $U$ is ordered, all elements before $u_i$ are less than or equal to $u_i$ and hence less than or equal to $i$. Therefore they must also be executed before $i+1$ days have passed, by the same reasoning. Therefore at least $i+1$ tasks must be executed in the first $i$ days, which contradicts the rule that only one element may be performed daily.
        
        Therefore, $u_{i+1} \geq i+1$.
    \end{itemize}
    
    Therefore in all schedulable states $\forall i: u_i\geq i$.
\end{proofof}

\begin{proofof}{\wref{pro:forcing}}
    Let $U$ be an urgency state of a Pinwheel Scheduling system $A$, and let $i'$ be a task in this system such that $u_{i'}=i'$. Let $S$ be an arbitrary schedule for $A$ from $U$. As $U$ is ordered, $\forall i: u_i \leq u_{i+1}$, so for all elements preceding $i'$ it holds that $u_{i<i'}\leq u_{i'}$ and hence $u_{i<i'}\leq i$. Assume towards a contradiction that there exists a $i''\leq i'$ such that the task at $i''$ is executed after $i'$ days. Initially, $u_{i''}(0)\leq i$. $u_{i''}$ decreases by 1 each day, so if it is executed on day $t$, before it is executed $u_{i''}(t\geq i+1)\leq i - (i+1) < 0$. This contradicts the safety condition that $\forall i,t: u_i(t) \geq 0$.
    
    Therefore no task at positions $\leq i'$ can be executed later than day $i'$.
\end{proofof}

    \section{Worked Examples}
\label{app:worked-examples}

This appendix contains some examples to illustrate the impact of optimisations.
The core procedure is represented in \wref{fig:backtracking procedure}.

\begin{figure}[hbt]
    \centering
    \includegraphics[width=8cm]{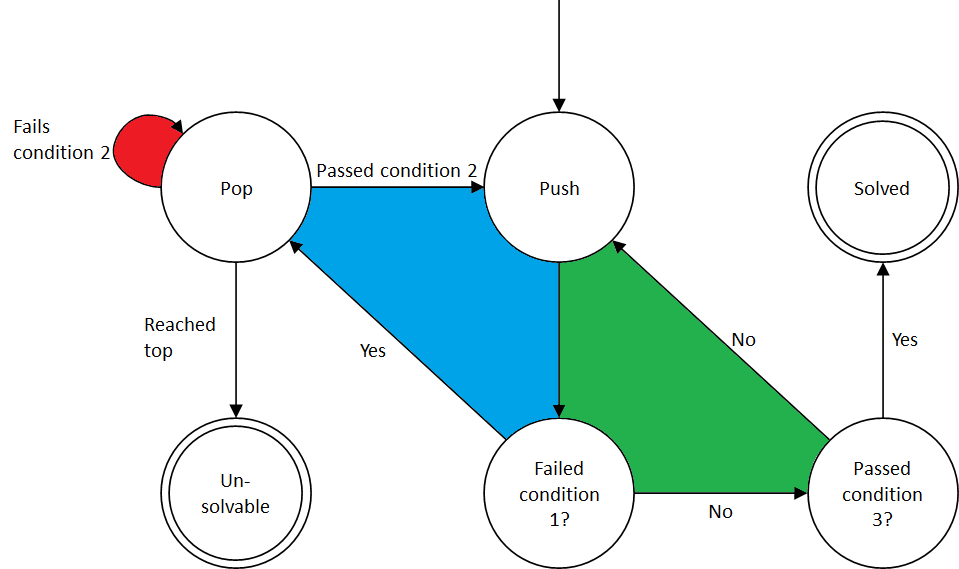}
    \caption{%
    	The backtracking procedure used to solve Pinwheel Scheduling problems. The left hand (red) loop climbs the tree when nodes have no potentially valid descendants, shortening $S_c$. The central (blue) loop shows letters being tried until one is found which is valid, moving horizontally. The right hand (green) loop shows the extension of the tree when the safety conditions are passed.}
    \label{fig:backtracking procedure}
\end{figure}

\textsl{\textbf{Note:} The examples follow the convention of the implementation,
which differs from the presentation in the main text by
(1) indexing tasks starting at $0$ (not $1$), and
(2) Pinwheel Scheduling instances are listed by weakly \emph{decreasing} frequency,
not increasing.}

The description below uses three conditions for a solution prefix $S$
and the corresponding state $X$:
\begin{enumerate}
    \item \(S\) is feasible (\(\forall t,i:x_i(t)\leq a_i\)).
    \item \(S\) is not known to fail in the future (it has unexamined extensions that may succeed).
    \item \(S\) eventually succeeds (\(X(p+t) = X(t) \) for some \(t\) such that \(p > 0\)).
\end{enumerate}
Any sequence which obeys the first two conditions (the safety conditions) is a \emph{candidate solution prefix} (hence \(S_c\)). To find out that and instance \(A\) is unsolvable, it must be shown that all candidate solution prefixes are eliminated by failing the safety conditions. The third condition invokes periodicity~-- any path that returns to somewhere it has been can be followed indefinitely, returning to the that location after each loop. Thus if \(S_c\) has this property it can be safely repeated indefinitely, which makes \(S_c\) a solution to \(A\).

\subsection{Worked Examples}

\subsubsection{Naïve}
\label{app:naive-worked-examples}

Examples of the trees generated by the Naïve method are shown in \wref{fig:Naive 6,3,2} (unschedulable) and \wref{fig:Naive 6,3,3} (schedulable). These are implemented serially, with the following preferences: If possible, push letter 0 onto the stack. If this is impossible, push the next letter, if there is one. If this is impossible pop the last letter. 

\textsl{\textbf{Note:} Solutions are not shown explicitly, but can be recovered for each path through the tree according to the placement of 0's}

\begin{figure}
    \centering
    \includegraphics[width=8cm]{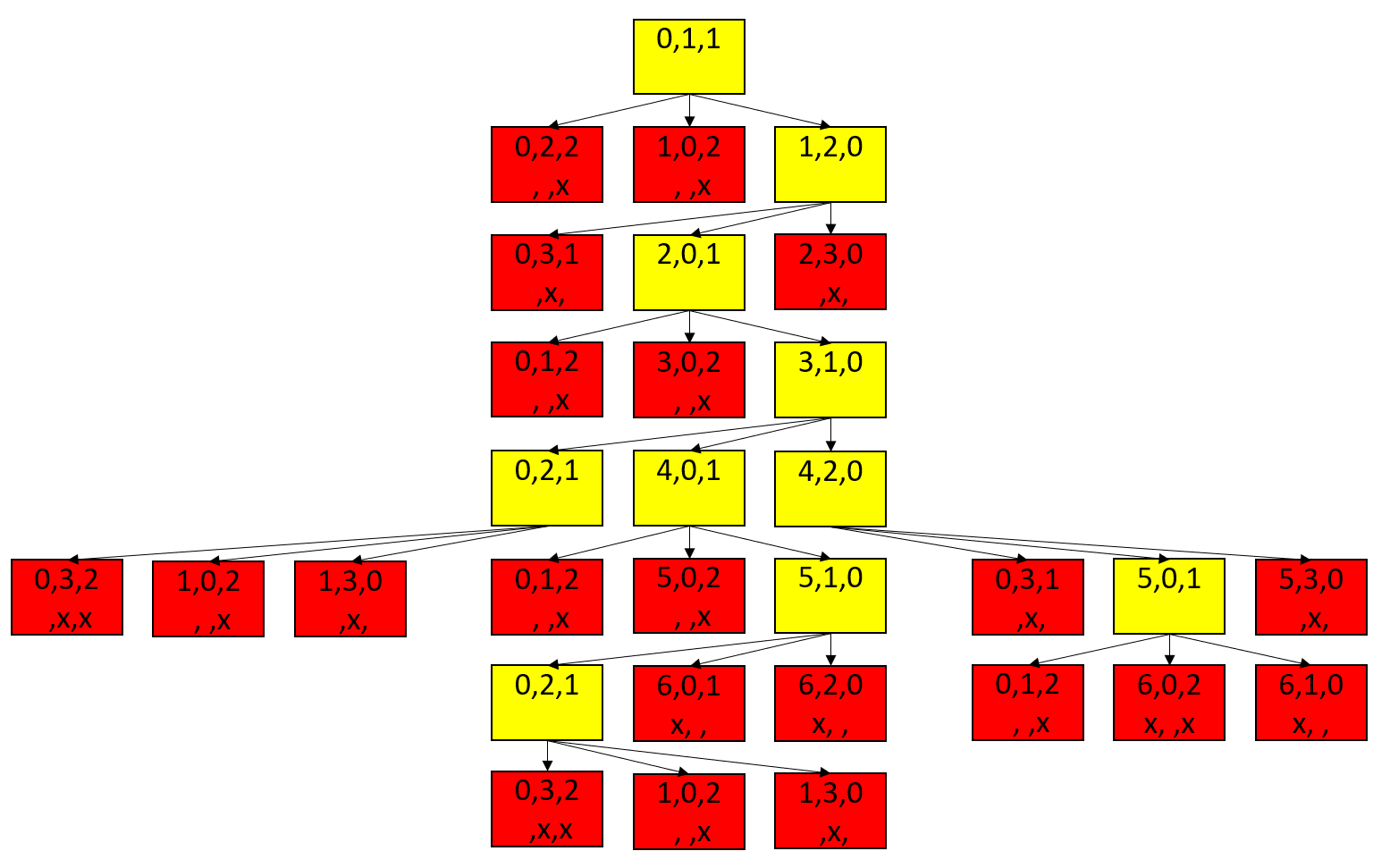}
    \caption{An example of the Naïve procedure from \wref{sec:Naive} for the input (6, 3, 2). States which pass condition 1 are shown in yellow, while states which fail it are shown in red. Every state with descendants executes tasks 1, 2, 3 in order from left to right. As the tree is closed, success testing does not find a repeated condition, but success testing would compare all yellow elements to all of their ancestor elements for a total cost of 34 comparisons.}
    \label{fig:Naive 6,3,2}
\end{figure}

\begin{figure}
    \centering
    \includegraphics[width=8cm]{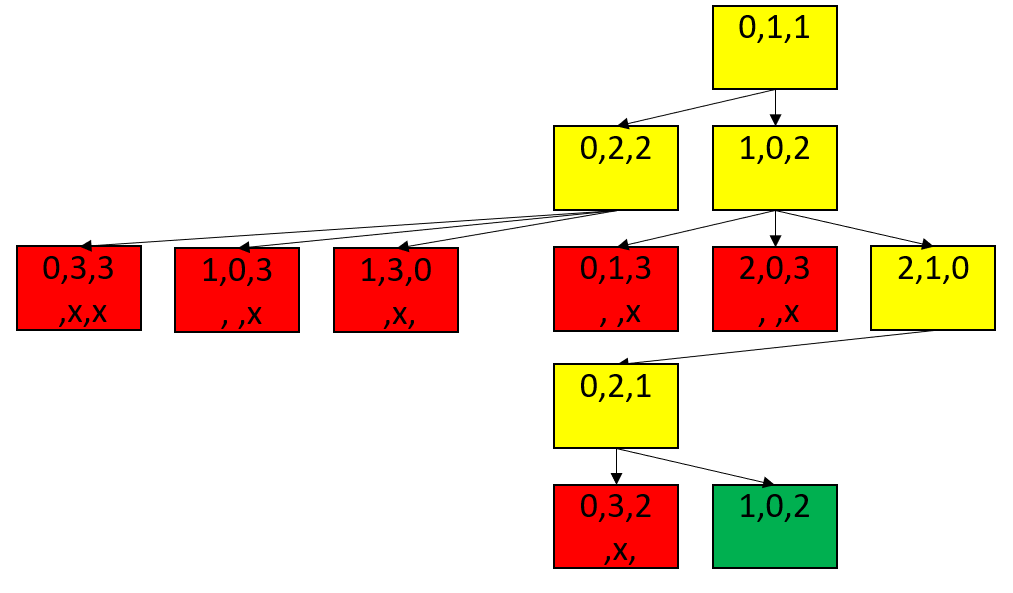}
    \caption{An example of the Naïve procedure from \wref{sec:Naive} for the input (6, 3, 3). States which pass condition 1 are shown in yellow, states which fail it are shown in red and the eventual successful case is shown in green. Every state with descendants executes tasks 1, 2, 3 in order from left to right, though several states have fewer than three descendants as the tree is open and so only partially explored. Every yellow state was compared to all of its ancestor elements while the green case was only compared until an identical state was found, for a total cost of 9 comparisons.
    If this example had used the most urgent to least urgent order described in \wref{sec:Naive} it would have examined 27 nodes, for a testing cost of 104 comparisons.}
    \label{fig:Naive 6,3,3}
\end{figure}

\subsubsection{Optimised}
\label{subsubsection: optimised worked examples}

Worked examples for the cases shown in \wref{app:naive-worked-examples} are repeated using the Optimised algorithm in \wref[Figures]{fig:optimised 6,3,2} and~\ref{fig:optimised 6,3,3}.

\begin{figure}
    \centering
    \includegraphics[width=7cm]{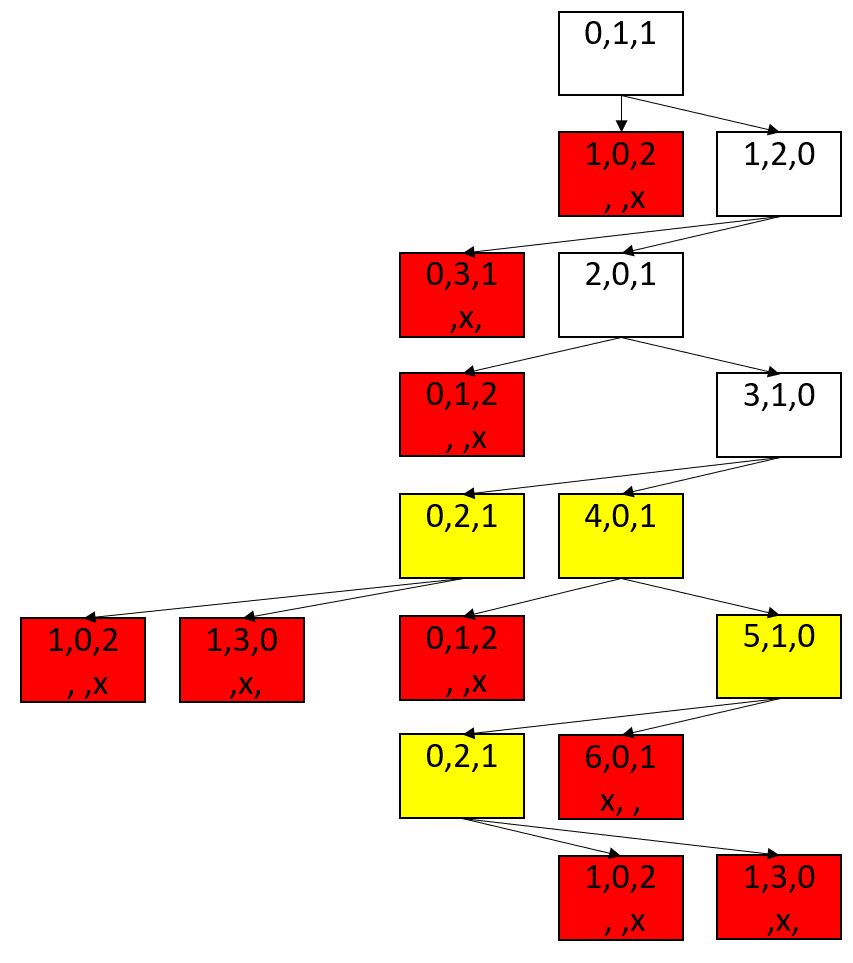}
    \caption{An example of the Optimised algorithm from \wref{sec:optimisations} for the input (6, 3, 2). States which pass condition 1 are shown in yellow or white, while states which fail it are shown in red. As opposed to the Naïve method for solving the same input (\wref{fig:Naive 6,3,2}), only tasks which were not performed the previous day are executed. The minimum solution length for this input is 4, so only nodes deeper than 4 (coloured yellow) require any success testing. Each such node is compared with all elements at least 4 days older than itself for a total of 7 comparisons and 17 nodes (while the Naïve algorithm uses 34 comparisons and 31 nodes for this instance).}
    \label{fig:optimised 6,3,2}
\end{figure}

\begin{figure}
    \centering
    \includegraphics[width=3.5cm]{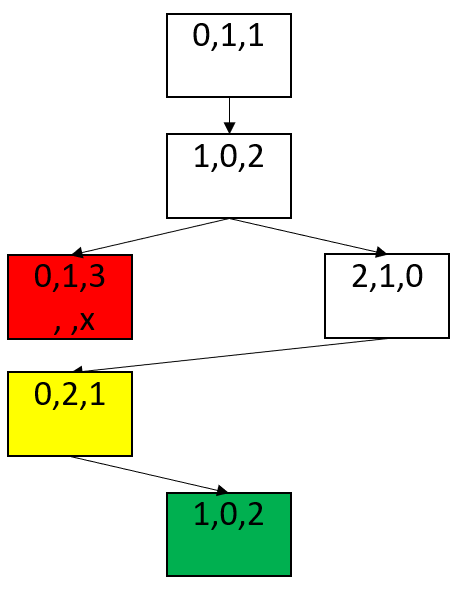}
    \caption{An example of the Optimised algorithm from \wref{sec:optimisations} for the input (6, 3, 3). States which pass condition 1 are shown in yellow, white or green; while states which fail it are shown in red. As opposed to the Naïve method for solving the same input (\wref{fig:Naive 6,3,3}), only tasks which were not performed the previous day are executed. The minimum solution length for this input is 3, so only nodes deeper than 3 (coloured yellow or green) require any success testing. Each such node is compared with all elements at least 3 days older than itself for a total of 3 comparisons and 6 nodes (while the Naïve algorithm uses 9 comparisons and 12 nodes for this instance).}
    \label{fig:optimised 6,3,3}
\end{figure}

}

\ifdraft{\ifthenelse{\boolean{collectnotes}}{
	\clearpage
	\part*{Notes-to-self}
	\printnotestoself
}{}}{}

\end{document}